\newfont{\BB}{msbm10 at 10pt} 
\newcommand{\CBB}{\mbox{{\BB C}}}  
\newtheorem{lemma}{Lemma}
\newtheorem{theorem}{Theorem}
\newtheorem{definition}{Definition}
\newtheorem{proposition}{Proposition}
\begin{document}

\title{Proof of non-convergence of the short-maturity expansion for the SABR model}
\author{Alan L. Lewis\footnote{Newport Beach, California, USA; email: alewis@financepress.com}
  \quad and Dan Pirjol\footnote{School of Business, 
  	Stevens Institute of Technology, Hoboken, NJ 07030; email:dpirjol@gmail.com}}

\date{26 July 2021}

\maketitle
	
\begin{abstract}
We study the convergence properties of the short maturity expansion of option 
prices in the uncorrelated log-normal ($\beta=1$) SABR model. 
In this model the option time-value can be represented as an integral of the form
$V(T) = \int_{0}^\infty e^{-\frac{u^2}{2T}} g(u) du$ with $g(u)$ a ``payoff 
function'' which is given by an integral over the McKean kernel $G(s,t)$. 
We study the analyticity properties of the function $g(u)$ in the complex
$u$-plane and show that it is holomorphic in the strip $|\Im(u) |< \pi$. Using this result
we show that the $T$-series expansion of $V(T)$ and implied volatility are asymptotic (non-convergent for any $T>0$). 
In a certain limit which can be defined either as the large volatility limit $\sigma_0\to \infty$ at fixed $\omega=1$, or the small vol-of-vol limit $\omega\to 0$ limit at fixed $\omega\sigma_0$, the short maturity $T$-expansion for the implied volatility 
has a finite convergence radius $T_c = \frac{1.32}{\omega\sigma_0}$. 
\end{abstract}



\section{Introduction and motivation}
\label{sec:1}

The SABR model is a versatile stochastic volatility model which has proved
very popular with practitioners since its introduction almost 20 years ago \cite{SABR}.
It was originally introduced to model interest rate volatilities, 
but its application has been extended later also to other asset classes, 
such as FX and commodities. 
The model is described by the diffusion
\begin{eqnarray}
&& dS_t = \sigma_t \mathcal{C}(S_t) dW_t \\
&& d\sigma_t = \omega \sigma_t dZ_t
\end{eqnarray}
where $(W_t, Z_t)$ are standard Brownian motions correlated with correlation 
$\rho\leq 0$.
The volatility of volatility (vol-of-vol) parameter $\omega$ determines the curvature of the implied volatility, and the backbone function $\mathcal{C}(S_t)$ is introduced such that the model captures the smile dynamics of the ATM (``at-the-money") implied volatility 
under spot price changes.

In the original SABR paper \cite{SABR} the backbone function was chosen 
as a power function (CEV-like)  $\mathcal{C}(S)=S^\beta$ with $0\leq \beta \leq1$. 
In practice more complicated forms are
used, such as the three-regime backbone of de Guillaume, Rebonato and Pogudin \cite{Rebonato}, reflecting the empirically observed backbone behavior of swaption volatilities. Since in the academic literature the SABR model is typically defined with CEV-like backbone, we also use the same convention and refer to the values of 
$\beta$ in the SABR model implied as the CEV backbone. 

The leading order in the short maturity expansion for the implied volatility for the SABR model was obtained in \cite{SABR}. 
The subleading $O(T)$ correction was also computed in this paper at the ATM point. The result has a simple analytical form, and is easily implemented in practice for model simulation and calibration. This feature contributed to the widespread adoption and popularity of the model. 

Higher order corrections to the short maturity expansion of the implied volatility in the SABR model were obtained by Henry-Labord\'ere \cite{HLbook} and Paulot \cite{Paulot2009}. The complete $O(T^2)$ contribution
was obtained in \cite{Paulot2009}, although its evaluation involves numerical integration for some terms. 
A systematic algorithm for expanding the implied volatility in a double series expansion in log-strike $x=\log(K/S_0)$ and maturity $T$ was mentioned in \cite{Lewis2017}, and is used here once to generate (\ref{ATMfull}) below. (However, none of our subsequent results rely upon this algorithm.)

Throughout, we work with the so-called ``log-normal" SABR model: $\beta=1$. 
The (Black-Scholes) implied volatility in this model has the
 full parametric dependence $\sigma_{\rm BS} = \sigma_{\rm BS}(x,T,\sigma_0,\omega,\rho)$.
 Here $x=\log(K/S_0)$, so ATM means $x=0$, and $T$ is the  time to option maturity. 
 
 For reference purposes we give here the expansion of the ATM implied volatility to $O(T^2)$ (to our knowledge the full $O(T^2)$ term is new \cite{unpublished}): 
\begin{eqnarray} 
&& \hspace{30pt} \frac{1}{\sigma_0} \sigma_{\rm BS}(0,T,\sigma_0,\omega,\rho) =
1 + \frac{1}{24} \sigma_0 \omega T
\Big[6 \rho  + \frac{\omega}{\sigma_0} (2- 3 \rho^2) \Big]  \label{ATMfull}\nonumber \\
&&  \,\, + \frac{1}{1920} \omega^2 \sigma_0^2 T^2 
\Big[(-80 + 240 \rho^2 ) +
 \frac{\omega}{\sigma_0} \rho (240 - 180 \rho^2)  +
\frac{\omega^2}{\sigma_0^2} (-12 + 60 \rho^2 - 45 \rho^4) 
\Big] \nonumber \\
&& \qquad +  O(T^3) \,. 
\end{eqnarray}

From here on, we work with $\omega=1$; 
the general case can be recovered as

\begin{equation}\label{omscaling}
	\sigma_{\rm BS}(x,T,\sigma_0,\omega,\rho) = 
	\omega \times \sigma_{\rm BS} \left( x,\omega^2 T,\frac{\sigma_0}{\omega},1,\rho \right).
\end{equation}
With $\omega=1$ and $x=\rho=0$ (suppressing the display of all three parameters), a second reference expansion is:
\begin{eqnarray} \label{CapSigdef}
	& \quad \Sigma_{\rm BS}^2(T,\sigma_0) \equiv \left(\frac{\sigma_{\rm BS}}{\sigma_0} \right)^2  =
	1 + \frac{1}{6} T - \frac{1}{180} T^2 (1 + 15 \sigma_0^2) + \frac{1}{1680} T^3 (4 - 161 \sigma_0^2)
	 \label{ATMfull2} \\
	&  \,\, - \frac{1}{453600} T^4 (579 + 29980 \sigma_0^2  - 7560 \sigma_0^4) + O(T^5). \nonumber
\end{eqnarray}

Although efficient numerical techniques are available for option pricing with general strike and maturity \cite{Lewis2017,AKSbook}, series expansions are very convenient. Thus, it is useful to explore their limits of applicability. We address the following questions:  what is the nature of the short-maturity expansion of the implied volatility around the ATM point? Is it strictly asymptotic or convergent? If convergent, is there a finite radius of convergence? 

In this note, we show that the full expansion underlying
(\ref{ATMfull2}) is strictly asymptotic, via a careful analysis of the (closed-form) SABR option value $V(T,\sigma_0)$. Briefly, the argument is as follows. We show in (\ref{CATM}) below that the value function can be put into the form
$V(T,\sigma_0) = C e^{-T/8} \int_0^{\infty} e^{-u^2/2T} g(u,\sigma_0) \, \frac{du}{\sqrt{T}}$, calling $g$ a ``payoff function''. 
Our key result, Theorem~\ref{thm:1}, establishes that
$g(u,\sigma_0$) admits an analytic continuation to the strip $|\Im(u)| < \pi$ in the complex $u$-plane. This function has singularities at $u = \pm i \pi$. 
Theorem~\ref{thm:1} may be of separate interest because $g(u,\sigma_0)$ itself is given by a non-trivial integral. From Theorem~\ref{thm:1}, the non-convergence of (\ref{ATMfull2}) for all $T > 0$ follows from term-by-term integration of $g$'s power series in $u$.

Arguably, non-convergence may have been the expected result. However,  
there is a curious and interesting ``large $\sigma_0$ scaling limit" -- where the convergence story changes.  This limit was  introduced and first studied in \cite{PES}.
 In that limit, under our current setup, take $\sigma_0 \rightarrow \infty$ 
and $T \rightarrow 0$, holding $\tau \equiv \frac{1}{2} \sigma_0 T$ fixed. 
(Under general $\omega$ this limit corresponds to taking $\omega \to 0, \sigma_0\to \infty$ at fixed and arbitrary $T$, holding $\sigma_0 \omega$ fixed.)
Substituting  $T = 2 \tau/\sigma_0$ in  (\ref{ATMfull2}), one sees that 

\begin{equation} 
	 \hat{\Sigma}_{\rm BS}^2(\tau) \equiv \lim_{\sigma_0 \rightarrow \infty}
	  \Sigma_{\rm BS}^2 \left( \frac{2 \tau}{\sigma_0},\sigma_0 \right) = 
	 1 - \frac13\tau^2 + \frac{4}{15} \tau^4 - \frac{92}{315}
	\tau^6 + O(\tau^8). \label{Scaled}
\end{equation}
Mechanically, large $\sigma_0$ scaling has the effect of (i) suppressing all the odd $T$-powers in (\ref{ATMfull2}) and
(ii) keeping only the largest power of $\sigma_0$ in each even $T$-power.  
In the Appendix, we present a (new) derivation of the closed-form relation for $\hat{\Sigma}_{\rm BS}^2(\tau)$,
a relation first obtained in \cite{PES} by a time discretization argument. 
The series in (\ref{Scaled}) has a finite (non-zero) convergence radius in the complex $\tau$-plane (see Sec. \ref{sec:6}).

\section{The SABR value function}
\label{sec:2}

Our starting point is the exact representation of the (time) value $V$ of call option prices in the uncorrelated $\beta=1$ SABR model. From Eq.~(3.103) in \cite{AKSbook}:
\begin{eqnarray}\label{3.103}
&& V(K,T) \equiv \mathbb{E}[(S_T - K)^+] - (S_0 - K)^+ \\
&& \quad  = \frac{2\sqrt{K S_0}}{\pi}
\int_{s_-}^\infty 
\frac{G(T,s)}{\sinh s} \sin \left( \frac12\sigma_0 \sqrt{\sinh^2 s- \sinh^2 s_-}\right) ds \nonumber
\end{eqnarray}
with $s_- = \frac{1}{\sigma_0} \log|K/S_0|$. 
This expression assumes $\omega=1$. This parameter can be restored to a general value by replacing $T\to \omega^2 T$ and $\sigma_0 \to \sigma_0/\omega$ as in 
(\ref{omscaling}).

The function $G(t,s)$ is related to the McKean heat kernel $\mathcal{G}(t,s)$
\cite{McKean1970} for the Brownian motion on the Poincar\'e hyperbolic plane
$\mathbb{H}^2$. The precise relation is 
$G(t,s) = 2\pi \int_s^\infty \mathcal{G}(t,u) \sinh u du$.
The function $G(t,s)$ is given explicitly by
\begin{equation}\label{3.83}
G(t,s) = \frac{e^{-t/8}}{\sqrt{\pi t}} \int_s^\infty \frac{e^{-u^2/(2t)} \sinh u}{
\sqrt{\cosh u - \cosh s}} du \,.
\end{equation}
Geometrically, suppose $s(x,y)$ is the hyperbolic distance between two points $x$ and
$y$ in $\mathbb{H}^2$. 
Then,  $\mathcal{G}(t,s(x,y))$ is the transition density for a hyperbolic Brownian particle,
starting at $x$, to reach the point $y$ after time $t$. Thus $G(t,s)$ is similar to a
complementary distribution function – the tail probability for the particle to move a hyperbolic distance greater than $s$ at time $t$. (See \cite{AKSbook} for further discussion.)

A similar integral expression to (\ref{3.103}) holds in the uncorrelated
SABR model with $0<\beta<1$, with different upper and lower integration limits,
and a different form for the sine factor. 
We study here the $\beta=1$ case for definiteness.

Combining (\ref{3.103}) with the integral representation of $G(t,s)$ in 
Eq.~(\ref{3.83}) and exchanging the order of 
integration, the option time value can be put into the form
\begin{equation}\label{Cgeneral}
V(K,T) = \frac{2\sqrt{K S_0} e^{-T/8}}{\pi^{3/2} \sqrt{T} }
\int_{s_-}^\infty e^{-\frac{u^2}{2 T}} g(u, s_- ) du
\end{equation}
with
\begin{eqnarray}\label{galt}
&& g(u,s_- ) \equiv \sinh u \int_{s_-}^u \frac{1}{\sqrt{\cosh u - \cosh s}} h(s,s_-) ds\,,\\
&& h(s,s_- ) \equiv 
\frac{\sin(\frac{\sigma_0}{2} \sqrt{\sinh^2 s - \sinh^2 s_-})}{\sinh s}\,.
\end{eqnarray}

At the ATM point $K=S_0$ we have $s_-=0$ and the expression (\ref{Cgeneral}) simplifies as
\begin{equation}\label{CATM}
 V(K=S_0,T) = \frac{2S_0 e^{-T/8}}{\pi^{3/2} \sqrt{T} }
\int_0^\infty e^{-\frac{u^2}{2 T}} g(u) du
\end{equation}
with
\begin{equation}\label{galtdef}
g(u) \equiv \sinh u \int_0^u \frac{1}{\sqrt{\cosh u - \cosh s}} h(s) ds\,,\quad
h(s) \equiv \frac{\sin(\frac{\sigma_0}{2} \sinh s)}{\sinh s}, \quad (u > 0).
\end{equation}
We study the analyticity of $V(K,T)$ in the complex $T$ variable. 
It is instructive to first generalize the situation.

\section{Analyticity of a class of general value functions $V(T)$}

Consider the analyticity of general value functions $V(T)$ in the complex $T$-plane
which can be represented in the form 
\begin{equation}\label{Vdef}
V(T) = c_1 e^{-c_2 T} \int_0^\infty e^{-\frac{u^2}{2T}} g(u) \frac{du}{\sqrt{2\pi T}},
\end{equation}
and where $g(u)$ is any ``analytic payoff function". Here $c_{1,2}$ are two model-dependent constants,  
irrelevant for analyticity. We introduce the following definition.

A payoff function $g(u)$ is said to be an analytic payoff function if there exists a function $G(z)$ of the
complex variable $z$ which is regular (analytic and single-valued) in the circle $|z| < R$ $(0 < R \le \infty)$ and a constant
$\Delta > 0$ such that $G(u) = g(u)$ for $0 \le u < \Delta$. 
We call $G(z)$ the analytic continuation of $g(u)$.\footnote{\label{ft:Lukacs} Our definition is in the spirit of Lukacs \cite{Lukacs} for ``analytic characteristic functions". A difference is that the agreement between $G$ and $g$ need only hold here for (an interval of) the \emph{positive} real axis.}  When $R = \infty$, then $G(z)$ is entire.

A further adopted restriction, very convenient for our problem, assumes $G(u)$ an odd function:
$G(-u) = -G(u)$. As it turns out, this antisymmetry holds for analytic continuations $G(u)$ under both
SABR and Black-Scholes (BS) models, with $G_{\rm {SABR}}(u)$ and 
$G_{\rm {BS}}(u)$ respectively. There is an important (subtle) point regarding oddness. As it's seen in the BS case
in an elementary way, consider that first. 
 
 With $x_T = \log S_T$, a generic call option value function $V(T) = E[(e^{x_T} - K)^+ | I_0]$, with $K$ the strike price,
 using $x^+ \equiv \max(x,0)$. Here $I_0$ is the set of initial conditioning information: $S_0$ in the BS model,
 $(S_0,\sigma_0)$ in the SABR model, etc. 
  Under BS, $S_T$ follows geometric Brownian motion
 with $x_T - x_0 \sim N(-\frac{1}{2} \sigma^2 T, \sigma^2 T)$. Here $N(\mu,v)$ denotes a normal distribution with mean $\mu$ and variance $v$, and (for this section alone) ``$\sim$" denotes ``is distributed as". Taking ATM with $S_0 = K = \sigma =1$, 
 
 \begin{eqnarray}\label{BSpayoff}
 	& V_{BS}(T) \,\, &= \int_0^\infty e^{-(u + \frac{1}{2} T)^2/2 T} (e^u -1) \frac{du}{\sqrt{2\pi T}}	 \\ 
 	 && = 2 \, e^{-\frac{1}{8} T} \int_0^\infty e^{-\frac{u^2}{2 T}}  
  \sinh \left( \frac{u}{2} \right) \frac{du}{\sqrt{2\pi T}} = \mbox{Erf} \left( \frac{T^{1/2}}{2^{3/2}} \right),  \nonumber
 \end{eqnarray}
showing a well-known result using the error function. Thus, $G_{BS}(u) = \sinh \left( \frac{u}{2} \right)$, odd as advertised. But, of course the ``original" payoff 
function $g_{\rm BS}(u)$ was defined for $u < 0$ ($S_T < K$) and vanishes there. Arguably,
$g_{\rm BS}(u) = \{ (\sinh(\frac{u}{2}))^+: u \in \bf{R}\}$,  certainly not odd. 
Yet, given the representation (\ref{Vdef}), we find $g(u)$
admits an \emph{analytic continuation} $G(u)$, an odd function. Thus, $G_{\rm BS}(u) \neq g_{\rm BS}(u)$ for real negative $u$,
a possibility already hinted at in footnote \ref{ft:Lukacs}. 
 
We belabor the point because a similar thing happens with the SABR model. Manifestly from (\ref{galtdef}), $g(-u) = g(u)$, easily confirmed from a plot. But, the power series (\ref{eq:gSABRseries}) below is an odd series. The resolution of the apparent discrepancy is that the analytic continuation of
$g(u)$, to the negative real axis via $G(u)$, is \emph{not} found by mechanically taking a negative value of $u$ in the
integral of (\ref{galtdef}), even though that (extended) integral technically exists. (Indeed, a mechanical plot of 
(\ref{galtdef}) over an interval $(u_1,u_2)$, with $u_1 < 0 < u_2$, shows a function not even differentiable at $u=0$). Instead, the analytic continuation \emph{enforces the antisymmetry} of the power series for $g(u)$. That should motivate part of our 
key Theorem \ref{thm:1} below. That theorem will establish that $G_{\rm SABR}(u)$ is indeed an analytic payoff function
with finite convergence radius $R$.

\textbf{Small-maturity expansion of the value function.}
Under our assumptions, $G(u) = \sum_{k=0}^{\infty} a_k u^{2 k+1}$ for some 
sequence of coefficients
$\{a_k\}$, as long as $|u| < R \le \infty$. We allow finite $R$, or $R = +\infty$ for entire functions.
Integrating term-by-term gives a formal expansion of a ``normalized value function" $\hat{V}(T)$,
defined by 
\begin{eqnarray}\label{Vexp2}
	&& \int_0^\infty e^{-\frac{u^2}{2T}} \, G(u) \frac{du}{\sqrt{2\pi T}} = 
	\sum_{k=0}^\infty a_k \int_0^\infty e^{-\frac{u^2}{2T}} u^{2k+1} \frac{du}{\sqrt{2\pi T}}\\
	&& = 
	\sqrt{\frac{T}{2\pi}} \sum_{k=0}^\infty a_k \, (2T)^k \, \Gamma(1+k) =  \sqrt{\frac{T}{2\pi}} \, \hat{V}(T). \nonumber
\end{eqnarray}
$\hat{V}(T)$ differs from $V(T)$ by a $\sqrt{T}$ and the pre-factors $c_1 e^{-c_2 T}$ of (\ref{Vdef}).
Our issue is the convergence or not, of the power series for $\hat{V}(T)$. Consider three cases:

\begin{enumerate}
	\item $G(u)$ analytic with $R < \infty$. 
(Example: SABR with $R = \pi$. This will be shown below in Sec.\ref{sec:SABRanalyticity}.).
	\item $G(u)$ entire and of exponential type $k$. (Example: BS with $k = \frac{1}{2}$.).
	\item $G(u)$ entire of order 2 and type $k$.
\end{enumerate} 
Now, by the root test for 
convergence, if $\hat{V}(T) = \sum b_n T^n$ converges,
its radius of convergence is $r = \limsup_{n \rightarrow \infty} |b_n|^{-1/n}$.
Here $b_n = 2^n a_n \Gamma(1+n)$. Freely invoking  Stirling's approximation, 
we obtain the following convergence properties for each of the cases enumerated.

{\bf Case 1}. Since $G(u)$ is analytic with radius $R$, $\limsup_{n \rightarrow \infty} |a_n|^{-1/(2n)} = R$. Thus, the convergence radius of $\hat V(T)$ is
\begin{eqnarray*}\label{stirling1}
 &\sqrt{r} \,\, &=  \lim_{n \rightarrow \infty} |b_n|^{-1/(2n)} = 2^{-1/2} \times
    \lim_{n \rightarrow \infty} |a_n|^{-1/(2n)} |\Gamma(1+n)|^{-1/(2n)}  \\
 &   & =  2^{-1/2} \times
    \lim_{n \rightarrow \infty} R  \, e^{-\frac{1}{2} \log n - \frac{1}{2} + O((\log n)/n)} = 0,
\end{eqnarray*}
In words, the $T$-series for $\hat{V}(T)$, under Case 1, has zero radius of convergence. 

{\bf Case 2}. Since $G(u) = \sum_{k=0}^{\infty} a_k u^{2 k+1}$ and $G(u) =  O(e^{k u})$ at $\infty$, 
for large $n$, $|a_{n}| \sim k^{2 n}/(2 n!)$. Now
\begin{eqnarray*}\label{stirling2}
	&\sqrt{r} \,\, &=  \lim_{n \rightarrow \infty} |b_n|^{-1/(2n)} \sim \frac{1}{\sqrt{2} \, k} \times
	\lim_{n \rightarrow \infty} \left( \frac{\Gamma(1+n)}{\Gamma(1 + 2 n)} \right)^{-1/(2n)}  \\
	&   &=  \frac{1}{\sqrt{2} \, k}\times
	\lim_{n \rightarrow \infty} \, e^{\frac{1}{2} \log n + (\log 2 - \frac{1}{2}) + O((\log n)/n)} = +\infty,
\end{eqnarray*}
Thus, under Case 2, $\hat{V}(T)$ is entire. This agrees with the BS result from (\ref{BSpayoff}): since $\mbox{Erf}(z)$ is entire and odd, $\hat{V}(T) = \mbox{Erf}(c \sqrt{T})/\sqrt{T}$ is an entire function of $T$.

{\bf Case 3}. If $G(u) =  O(e^{-k u^2})$ at $\infty$, for large $n$, $|a_{n}| \sim k^n/n!$. Now
\begin{eqnarray*}\label{stirling3}
	&\sqrt{r} \,\, &=  \lim_{n \rightarrow \infty} |b_n|^{-1/(2n)} \sim \frac{1}{\sqrt{2k}} \times
	\lim_{n \rightarrow \infty} \left( \frac{\Gamma(1+n)}{\Gamma(1 + n)} \right)^{-1/(2n)}  = \frac{1}{\sqrt{2k}}.
\end{eqnarray*}
For order 2, type $k$ payoffs, the $\hat{V}(T)$ series converges for $T < \frac{1}{2 k}$.

\section{Analyticity of the SABR payoff function} \label{sec:SABRanalyticity}
\label{sec:4}

In this section we study the extension of the function $g(u)$ defined by the integral
(\ref{galtdef}) to complex values of $u$.
The integral (\ref{galtdef}) is well defined along the real axis $\Re(u)>0$. 
We would like to construct a holomorphic function $G(u)$ which reduces to $g(u)$ along the positive real axis, and determine its maximal domain of holomorphicity.
The limitations of this domain are due to the singularities of the factor $\sqrt{\cosh(u) - \cosh(s)}$ in the denominator. Defining this factor as a single-valued function for complex $u$ requires some care in the choice of the branch cut of the square root. We will choose to define the square root with a cut along the real positive axis, and denote it as $(\sqrt{z})_+$. Specifically, if $\sqrt{z}$ denotes the standard square-root with a branch-cut along the negative
real axis, then

\begin{eqnarray}\label{possqrt}
	(\sqrt{z})_+ = \left\{
	\begin{array}{cl}
		\sqrt{z}  & \Im(z) \ge 0, \\
		-\sqrt{z} &  \Im(z) < 0. 
		\end{array}
	\right.
\end{eqnarray}
This choice is guided by the following lemma. 

\begin{lemma}\label{lemma:1}
The equation $\cosh u - \cosh(w u) = z$ with $w \in [0,1]$ and real $z > 0$ has no solutions in the half-strip $\Re(u) \geq 0, 0 < \Im(u) < \pi$.

\end{lemma}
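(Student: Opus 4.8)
The plan is to prove Lemma~\ref{lemma:1} by writing $u = a + ib$ with $a \ge 0$ and $0 < b < \pi$, and showing directly that $\Im\bigl(\cosh u - \cosh(wu)\bigr) \neq 0$ unless the whole expression is $\le 0$. Using $\cosh(a+ib) = \cosh a \cos b + i \sinh a \sin b$, we get
\begin{equation}\label{eq:imag-part}
\Im\bigl(\cosh u - \cosh(wu)\bigr) = \sinh a \sin b - \sinh(wa)\sin(wb).
\end{equation}
Since $0 < b < \pi$ we have $\sin b > 0$, and since $0 \le wb \le b < \pi$ we have $\sin(wb) \ge 0$; also $a \ge 0$ gives $\sinh a \ge \sinh(wa) \ge 0$. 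So both products are nonnegative, and the claim that \eqref{eq:imag-part} can vanish only in degenerate cases requires comparing $\sinh a \sin b$ against $\sinh(wa)\sin(wb)$ term by term.

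The key step is the inequality $\sinh a \,\sin b \ge \sinh(wa)\,\sin(wb)$ for $a\ge0$, $0<b<\pi$, $w\in[0,1]$, with equality characterized. For the hyperbolic factor, $t \mapsto \sinh(ta)$ is increasing in $t$, so $\sinh(wa) \le \sinh a$. For the trigonometric factor I would use that $\sin$ is concave on $[0,\pi]$ with $\sin 0 = 0$, hence $t \mapsto \sin(tb)$ satisfies $\sin(wb) \le w \sin b$... but that only gives $\sin(wb) \le \sin b$ directly when combined with $w \le 1$ is false in general (e.g. $b$ near $\pi$, $w = 1/2$ gives $\sin(b/2)$ which can exceed $\sin b$). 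So the naive factorwise bound fails, and one must work with the product. Instead, I would show $\sinh a\,\sin b - \sinh(wa)\,\sin(wb) \ge 0$ by treating it as a function $\phi(w)$ on $[0,1]$ with $\phi(1) = 0$... no: $\phi(1) = 0$ is at the wrong endpoint. The cleaner route: fix $a,b$ and define $\psi(w) = \sinh(wa)\sin(wb)$; then $\psi(0) = 0$, $\psi(1) = \sinh a \sin b$, and I want $\psi(w) \le \psi(1)$ on $[0,1]$. Compute $\psi'(w) = a\cosh(wa)\sin(wb) + b\sinh(wa)\cos(wb)$. For $wb \le \pi/2$ both terms are $\ge 0$ so $\psi$ is increasing there; for $wb > \pi/2$ I would need a separate argument — perhaps writing $\psi(w) = \tfrac12\sinh(wa)\bigl(\cos((1-w)b)-\cos((1+w)b)\bigr)$ via product-to-sum... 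Actually the most robust approach is: if \eqref{eq:imag-part} equals zero with $0 < b < \pi$, then either $a = 0$ (forcing $\cosh u - \cosh(wu) = \cos b - \cos(wb) \le 0$ since $\cos$ is decreasing on $[0,\pi]$ and $wb \le b$, contradicting $z > 0$), or $w = 0$ (giving $\Im = \sinh a\sin b > 0$, no vanishing), or $w = 1$ (giving $\Im \equiv 0$ but then $\cosh u - \cosh u = 0 \ne z$), or $a > 0$ and $w \in (0,1)$ with $b \in (0,\pi)$ an interior case that must be ruled out.

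So the genuine content is the interior case. Here I would argue: suppose $a>0$, $w\in(0,1)$, $b\in(0,\pi)$, and the imaginary part vanishes, i.e. $\sinh a\,\sin b = \sinh(wa)\,\sin(wb)$. Using $\sinh(wa)<\sinh a$ strictly (since $w<1$, $a>0$), this forces $\sin(wb) > \sin b$, which requires $b > \pi/2$ and $wb$ on the rising part, specifically $wb < \pi - b$, i.e. $w < (\pi-b)/b$. Then I would examine the real part
\begin{equation}\label{eq:real-part}
\Re\bigl(\cosh u - \cosh(wu)\bigr) = \cosh a\cos b - \cosh(wa)\cos(wb),
\end{equation}
and show it is $\le 0$ under exactly these constraints ($b > \pi/2$ so $\cos b < 0$ while $wb < b$ and $wb$ could be either side of $\pi/2$). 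If $wb \le \pi/2$ then $\cos(wb) \ge 0$ and $\cosh(wa)\cos(wb) \ge 0 > \cosh a \cos b$, done. If $\pi/2 < wb < b < \pi$, then both cosines are negative; I'd compare $|\cosh a\cos b|$ vs $|\cosh(wa)\cos(wb)|$ — the former has the larger hyperbolic factor but the smaller-magnitude cosine ($|\cos b| < |\cos(wb)|$ since $b$ closer to... no, $b$ closer to $\pi$ so $|\cos b|$ larger). This subcase needs care and is where I'd spend the most effort, likely combining the vanishing-imaginary-part relation \eqref{eq:imag-part} to eliminate one variable and reduce \eqref{eq:real-part} to a one-variable inequality checkable by monotonicity. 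I expect this last subcase — $a > 0$, $w\in(0,1)$, $\pi/2 < wb < b < \pi$ — to be the main obstacle, since it is the only regime where both real and imaginary parts involve genuine competition between the hyperbolic growth in $a$ and the oscillation in $b$, and it must be resolved using both equations together rather than either one alone.
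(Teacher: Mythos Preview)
Your approach is the same as the paper's: split $u=a+ib$, compute the real and imaginary parts, and show that the imaginary part cannot vanish while the real part is positive. The paper organizes this more efficiently by observing that the imaginary part is strictly positive for $0<b\le\pi/2$ and the real part is strictly negative for $\pi/2\le b<\pi$; no further case analysis on $wb$ is needed.

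The step you flag as the ``main obstacle'' --- the subcase $\pi/2<wb<b<\pi$ --- is in fact the easiest, and you do not need the imaginary-part constraint at all there. For any $b>\pi/2$ (and $a>0$, $w\in[0,1)$) one has $\cos b<0$ and $\cosh a>\cosh(wa)$, hence
\[
\cosh a\,\cos b \;<\; \cosh(wa)\,\cos b \;\le\; \cosh(wa)\,\cos(wb),
\]
the last inequality because $\cos$ is decreasing on $[0,\pi]$ and $wb<b$. Thus the real part is negative for \emph{every} $b\in(\pi/2,\pi)$, independently of whether the imaginary part vanishes. This is exactly the paper's Step~2, and it closes your gap in one line. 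Your plan to ``combine the vanishing-imaginary-part relation to eliminate one variable'' is unnecessary work. (Incidentally, your claim that $\sin(wb)>\sin b$ forces $wb<\pi-b$ is backwards: for $b>\pi/2$ one has $\sin t>\sin b$ on $(\pi-b,b)$, so the correct constraint is $wb>\pi-b$; but as noted this constraint is never needed.)
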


\begin{proof}
Writing $u=x+iy$ we have $\cosh u - \cosh(w u) = r(x,y) + i s(x,y)$ with
\begin{eqnarray}
&& r(x,y) = \cos y \cosh x - \cos(wy) \cosh(wx) \\
&& s(x,y) = \sin y \sinh x - \sin (wy) \sinh (wx)
\end{eqnarray}

We distinguish the two cases:

i) $x=0$. We have $s(0,y)=0$ and $r(0,y)\leq 0$ for all $ 0 < y < \pi$, so the
equation $\cosh(u) - \cosh(wu) = z >0$ clearly does not have a solution.

ii) $x>0$. Fix $x$ and vary $y$ in $[0,\pi]$. 
We will show that if $s(x,y_0)=0$ has a zero at $y_0 \in (0,\pi)$, then
$r(x,y_0)<0$, which proves the statement of the lemma.

\textit{Step 1.  }
For $y\in [0,\frac{\pi}{2}]$ we have the lower bound 
\begin{equation}
s(x,y) > [\sin y - \sin(wy)] \sinh x > 0 \,, \mbox{ for } 0 < y  < \frac{\pi}{2}
\end{equation}
since $\sin y$ is increasing on $[0,\frac{\pi}{2}]$. 
We used here  $\sinh(wx)<\sinh x$ which holds for all $x>0$.
This implies that if $s(x,y)$ has a zero $y_0$ then it must lie in 
$[\frac{\pi}{2},\pi]$.

\textit{Step 2. }
For $y\in [\frac{\pi}{2},\pi]$, the function $r(x,y)$ is negative. This follows
from the upper bound 
\begin{equation}
r(x,y) < \cosh (wx) [\cos y - \cos(w y)] < 0 \,, \mbox{ for } \frac{\pi}{2} < y <\pi 
\end{equation}
since $\cos y$ is decreasing on $[0,\pi]$. 
We used here $\cosh x > \cosh (wx)$ and $\cos y < 0$.
\end{proof}

We rely subsequently upon two textbook results for analytic continuation. The first is the classic Schwarz reflection principle. 
The second is the analyticity of certain functions defined via an integration. 
For ease of reference we state the result below, in the formulation of Stein and Shakarchi \cite{SteinShakarchi}.

\begin{theorem}[Stein and Shakarchi \cite{SteinShakarchi}, Th.~5.4 Ch.~2]\label{thm:0}
   Let $F(z,w)$ be defined for $(z,w) \in \Omega \times [0,1]$, where $\Omega$ is an open set in $\CBB$.
   Suppose $F$ satisfies the following properties:
   
   {\rm (i)} $F(z,w)$ is holomorphic in $z$ for each $w$.
   
   {\rm (ii)} $F$ is continuous on $\Omega \times [0,1]$.  
      
   \noindent Then the function $f$ defined on $\Omega$ by $f(z) = \int_0^1 F(z,w) \, dw$ is holomorphic.
\end{theorem}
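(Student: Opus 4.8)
The plan is to verify the hypotheses of Morera's theorem for $f$, using Fubini's theorem to move the contour integral inside the $w$-integration. An equivalent route, which I would mention as an alternative, is to write $f$ as a locally uniform limit of the holomorphic Riemann sums $f_n(z) = \frac1n \sum_{j=1}^n F(z,j/n)$ and invoke the Weierstrass theorem on locally uniform limits of holomorphic functions; the analytic content is identical.

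First I would show that $f$ is continuous on $\Omega$. Fixing $z_0 \in \Omega$ and choosing $r>0$ with $\overline{D}(z_0,r) \subset \Omega$, hypothesis (ii) makes $F$ continuous, hence uniformly continuous and bounded by some $M<\infty$, on the compact set $\overline{D}(z_0,r) \times [0,1]$; uniform continuity gives $\sup_{w\in[0,1]} |F(z,w)-F(z_0,w)| \to 0$ as $z\to z_0$, so $|f(z)-f(z_0)| \le \sup_{w} |F(z,w)-F(z_0,w)| \to 0$. Next I would take an arbitrary closed solid triangle $T \subset \Omega$ and evaluate $\oint_{\partial T} f(z)\,dz$. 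Since $F$ is continuous and bounded on the compact set $(\partial T)\times[0,1]$, Fubini's theorem applies to the product of arclength measure on $\partial T$ with $dw$, giving
\[
\oint_{\partial T} f(z)\,dz = \oint_{\partial T}\!\left(\int_0^1 F(z,w)\,dw\right)dz = \int_0^1\!\left(\oint_{\partial T} F(z,w)\,dz\right)dw .
\]
For each fixed $w$, hypothesis (i) says $F(\cdot,w)$ is holomorphic on the open set $\Omega \supset T$, so Cauchy--Goursat gives $\oint_{\partial T} F(z,w)\,dz = 0$; hence $\oint_{\partial T} f(z)\,dz = 0$. As $f$ is continuous on $\Omega$ with vanishing integral around every triangle contained in $\Omega$, Morera's theorem then yields that $f$ is holomorphic on $\Omega$.

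The main obstacle is the interchange of $\oint_{\partial T}$ and $\int_0^1$: this is exactly where joint continuity (ii) is indispensable, since it supplies the uniform bound $M$ on the compact set $(\partial T)\times[0,1]$ that Fubini requires (and, in the Riemann-sum variant, the uniform continuity needed for locally uniform convergence of $f_n$). Everything else -- continuity of $f$, Cauchy--Goursat, and the Morera step -- is routine. I would also note as a by-product that, once holomorphy is established, differentiating under the integral sign gives $f'(z) = \int_0^1 \partial_z F(z,w)\,dw$, which is the form in which this result is used in the sequel.
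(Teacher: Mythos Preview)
Your proof is correct. Note, however, that the paper does not supply its own proof of this theorem: it is stated as a textbook result, quoted verbatim from Stein and Shakarchi (Theorem~5.4, Chapter~2), and merely invoked later in the proof of Theorem~\ref{thm:1}. Your Morera--Fubini argument is precisely the standard proof given in that reference, so there is nothing to compare.
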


We shall construct a function $G(u)$ which is holomorphic
in the entire strip $- \pi < \Im(u) < \pi$, and which reduces to the function 
$g(u)$ defined by the integral in (\ref{galtdef}) along the real positive $u$ axis.

\begin{definition}\label{def:1}
First, using  $(\sqrt{z})_+$, define $G(u) \equiv G_{Q1}(u)$ in the interior 
and boundaries of the upper half-strip in the first quadrant
 $Q_1 = \{ u  | arg(u) \in [0,\pi/2] \cap \Im(u)<\pi \}$ as the integral
\begin{equation}\label{gQ1def}
G_{\rm Q1}(u) \equiv \sinh u \int_0^u \frac{h(s) ds}{(\sqrt{\cosh u - \cosh s})_+} \,, 
u \in Q_1 \,.
\end{equation}
Elsewhere in the $|\Im(u)| <\pi$ strip, the function $G(u)$ is defined by combined application of the odd symmetry property in $u$ and Schwarz reflection principle
\begin{eqnarray}\label{thmcase2}
G(u) \equiv \left\{
\begin{array}{ccl}
-(G_{Q1}(-u^*))^* & \,, & u \in 
   Q_2 = \{ u | arg(u)\in (\frac{\pi}{2}, \pi ] \cap \Im(u)<\pi \}\\
-G_{Q1}(-u) & \,, &  u \in 
   Q_3 = \{ u | arg(u)\in [-\pi,-\frac{\pi}{2} ) \cap | \Im(u)|< \pi \} \\
(G_{Q1}(u^*))^* & \,, &  u \in 
   Q_4 =  \{ u | arg(u)\in [- \frac{\pi}{2},0 ) \cap  |\Im(u)| < \pi \}\\
\end{array}
\right.\nonumber \\
\end{eqnarray}
\end{definition}

\begin{theorem}\label{thm:1}
	$G(u)$ is holomorphic (and hence analytic) in $|\Im(u)| < \pi$.
	\end{theorem}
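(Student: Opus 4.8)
The plan is to establish holomorphicity of $G(u)$ in the strip $|\Im(u)| < \pi$ in three stages: (i) holomorphicity in the open interior of $Q_1$; (ii) extension across the positive real axis and the segment $u \in i(0,\pi)$ of the imaginary axis by continuity/reflection, so that $G$ is holomorphic in a neighborhood of $Q_1$ minus its ``far'' boundary; (iii) patching the four quadrant definitions in (\ref{thmcase2}) together using the Schwarz reflection principle and the oddness, checking agreement along the overlap rays. First I would rewrite $G_{Q1}(u)$ as an integral over the fixed interval $[0,1]$ via the substitution $s = wu$, $w \in [0,1]$:
\begin{equation*}
G_{Q1}(u) = \sinh u \int_0^1 \frac{u\, h(wu)\, dw}{(\sqrt{\cosh u - \cosh(wu)})_+}.
\end{equation*}
This is exactly the shape needed to invoke Theorem~\ref{thm:0}: I would set $F(u,w) = u \sinh u\, h(wu)/(\sqrt{\cosh u - \cosh(wu)})_+$ and verify its two hypotheses on $\Omega \times [0,1]$ for a suitable open $\Omega$ containing the interior of $Q_1$.

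The crux is hypothesis (ii), joint continuity, which is where Lemma~\ref{lemma:1} does the work. The integrand has a potential singularity exactly where $\cosh u - \cosh(wu) = 0$, and a branch-cut ambiguity in $(\sqrt{\cdot})_+$ exactly where $\cosh u - \cosh(wu)$ is real and positive (hits the cut on the positive real axis). Lemma~\ref{lemma:1} says that for $u$ in the (closed) half-strip $\Re(u)\ge 0$, $0<\Im(u)<\pi$ and $w\in[0,1]$, the quantity $\cosh u - \cosh(wu)$ never lands on the positive real axis, so $(\sqrt{\cosh u - \cosh(wu)})_+$ is continuous (indeed holomorphic in $u$) there; and the zero of the argument can occur only at $w=1$ (where $wu = u$) or, marginally, at $u=0$. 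Near $w=1$ the singularity is the integrable $1/\sqrt{1-w}$ type — I would expand $\cosh u - \cosh(wu) \approx u \sinh u\,(1-w)$ to see $F \sim \text{const}/\sqrt{1-w}$, uniformly for $u$ in compact subsets of the interior — so the integral converges and, after a further substitution $w = 1 - v^2$ to absorb the square-root, one gets a genuinely continuous integrand on $\Omega\times[0,1]$ and Theorem~\ref{thm:0} applies to give holomorphicity of $G_{Q1}$ in the interior of $Q_1$. I expect this uniformity-near-the-endpoint estimate, together with the bookkeeping of which branch of the square root is selected as $u$ varies, to be the main obstacle; everything else is routine.

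For stage (ii) I would argue that $G_{Q1}$ extends continuously to the positive real axis (where it agrees with the original $g(u)$ of (\ref{galtdef}), since there $\cosh u - \cosh s > 0$ and $(\sqrt{\cdot})_+$ is the ordinary positive root) and to the open segment $i(0,\pi)$ of the imaginary axis; combined with holomorphicity in the interior this gives a holomorphic function on an open neighborhood of $Q_1\setminus\{$outer arc at $\Im u = \pi$ or $|u|=\infty\}$ — in particular $G_{Q1}$ is holomorphic across the positive real axis, which licenses the Schwarz reflection $(G_{Q1}(u^*))^*$ defining $G$ on $Q_4$. For stage (iii), the four pieces in (\ref{thmcase2}) are each holomorphic on the interior of their quadrants by the same argument applied to reflected/rotated domains; it remains only to check they glue: along the positive real axis $Q_1$ and $Q_4$ pieces agree because $g$ is real there, along the positive imaginary axis $Q_1$ and $Q_2$ agree by the real-valuedness of $G_{Q1}$ on $i(0,\pi)$ (this needs a short check that the integral in (\ref{gQ1def}) is real for $u\in i(0,\pi)$, which follows since $\cosh u - \cosh s \le 0$ is real and $\sinh u$, $h(s)$ are purely imaginary in a compensating way — to be verified), along the negative imaginary and negative real axes by the odd-symmetry relations, and by Morera/removable-singularity the glued function is holomorphic across each ray. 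This yields $G$ holomorphic on all of $|\Im(u)| < \pi$.
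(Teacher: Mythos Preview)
Your proposal follows essentially the same route as the paper: the substitution $s=wu$ to bring the integral into the form of Theorem~\ref{thm:0}, Lemma~\ref{lemma:1} to guarantee the square-root argument stays off the cut, then Schwarz reflection together with the enforced odd symmetry to patch the four quadrants. You are in fact more careful than the paper on one point---the paper asserts joint continuity of the integrand on $\Omega\times[0,1]$ without addressing the $1/\sqrt{1-w}$ blow-up at $w=1$, whereas your proposed substitution $w=1-v^2$ is exactly what is needed to make Theorem~\ref{thm:0} apply cleanly.

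One small correction to your gluing check: on the segment $u=iy\in i(0,\pi)$ the matching condition between the $Q_1$ and $Q_2$ pieces is $G_{Q1}(iy)=-(G_{Q1}(iy))^*$, so you need $G_{Q1}$ to be \emph{purely imaginary} there, not real. This is indeed the case---$\sinh(iy)=i\sin y$ is imaginary, $h(it)$ is real, and $\cosh(iy)-\cosh(it)=\cos y-\cos t\le 0$ lies on the non-positive real axis where $(\sqrt{\cdot})_+$ returns a value on the positive imaginary axis, so the integral is real and the overall product imaginary---but your parenthetical ``to be verified'' was well placed.
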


\begin{proof}
The $s$-integral in (\ref{galtdef}) is well defined for positive real $u$. 
We would like to define an analytic continuation of this integral to the complex 
$u$-plane, in the strip $-\pi < \Im (u) < \pi$, which agrees with the original integral for positive real $u$.

With $(\sqrt{z})_+$, along the upper side of the cut (the positive real axis), \newline
$(\sqrt{\cosh(u) - \cosh(wu)})_+$ is positive and real, and reproduces the denominator in (\ref{galtdef}).

\begin{figure}[h]
\centering
\includegraphics[width=2.2in]{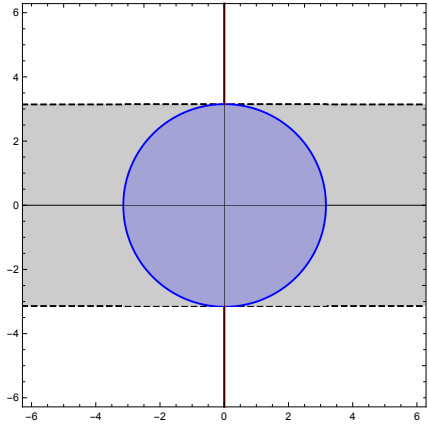}
\caption{The function $G(u)$ is holomorphic in the strip $-\pi < \Im(u) < \pi$. The convergence domain of the series expansion for $G(u)$ is the disc $|u| <\pi$, 
and the closest singularities to $u=0$ are the branch points at $\pm i \pi$.}
\label{Fig:strip}
\end{figure}

Next introduce the integral $G_{Q1}(u)$ defined as in (\ref{gQ1def})
for $u \in Q_1$. This can be written equivalently as
\begin{equation}
G_{Q1}(u) = 
u \sinh u \int_0^1 \frac{h(w u) dw}{(\sqrt{\cosh u - \cosh (wu)})_+} \,.
\end{equation}
The integrand is a holomorphic function for $u \in \mbox{Int}(Q_1)$  for each
$w\in [0,1]$ and is jointly continuous  in $u,w$.
Continuity follows by Lemma~\ref{lemma:1} which ensures that the argument of the square root never crosses the cut for all $u \in \mbox{Int}(Q_1)$.
This property is illustrated graphically in Figure~\ref{Fig:map}
which shows the mapping of the half-strip $0 \leq \Im (u) <\pi, \Re(u) \geq 0$ by the function $\cosh u - \cosh(w u)$ at fixed $w=0.8$. As $w$ is varied in $(0,1)$ the image of the half-strip does not cross the real positive axis, which ensures continuity of $(\sqrt{\cosh u - \cosh(wu)})_+$ in $u$. 

\begin{figure}[h]
\centering
\includegraphics[width=2.0in]{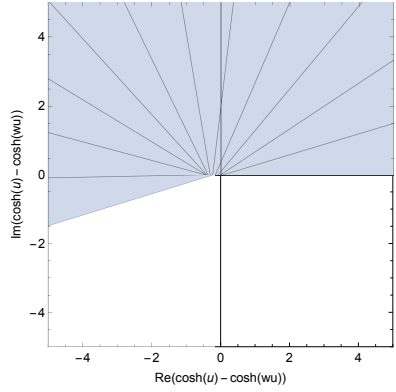}
\caption{The mapping of the half-strip $0\leq \Re (u), 0 \leq \Im (u) < \pi$
 by the function $\cosh(u) - \cosh(wu)$ with $w=0.8$. Lines of constant $\Im (u)$ are mapped to the radial curves extending from the center outwards. 
 }
\label{Fig:map}
\end{figure}

By Theorem \ref{thm:0} it follows that:

(i) the function $G_{Q1}(u)$ is holomorphic for $u\in {\rm Int}(Q_1)$. 

In addition, we have:
 
(ii) The limits of $G_{Q1}(u)$ along the axes bordering $Q_1$ are continuous with the interior values. Along the real $u$ axis this follows from the equality $(\sqrt{z})_+=\sqrt{z}$ for real positive $z$, and along the imaginary $u$ axis from Lemma~\ref{lemma:1}.

Combining (i) and (ii), the Schwarz reflection principle provides the analytic extension of $G(u)$ to $Q_1 \cup Q_4$. This gives the last line in (\ref{thmcase2}).


Enforcing the odd property $G(-u)=-G(u)$ and applying the Schwarz principle again provides 
the remaining continuations  to $Q_2$ and $Q_3$, and thus to the entire
open strip $-\pi < \Im(u) < \pi$.

\end{proof}

{\bf Comments.}
By construction, along the real positive $u$ axis, $G_{Q1}(u)$ reproduces the integral $g(u)$ in
(\ref{galtdef}). In addition, $G(u)$ approaches $g(u)$ continuously as $u$ approaches the real axis from below 
\begin{equation}\label{thmcase1}
	g(u) = \lim_{\epsilon\to 0^+} G_{Q1}(u - i\epsilon) \,,\quad
	\Im(u)=0, \, \Re(u)>0\,.
\end{equation} 
Of course, we more generally have that $G(u)$ is both continuous and continuously differentiable
as any axis is crossed in $|\Im(u)| < \pi$, since $G$ is analytic there. Along the negative real axis we have $G(-u) = - G(u)$ by the odd symmetry in $u$. All of this smoothness is evident in Fig. \ref{Fig:G}, which shows plots of both the real
and imaginary parts of $G(u)$ in a rectangle.

\begin{figure}[h]
\centering
\includegraphics[width=2.5in]{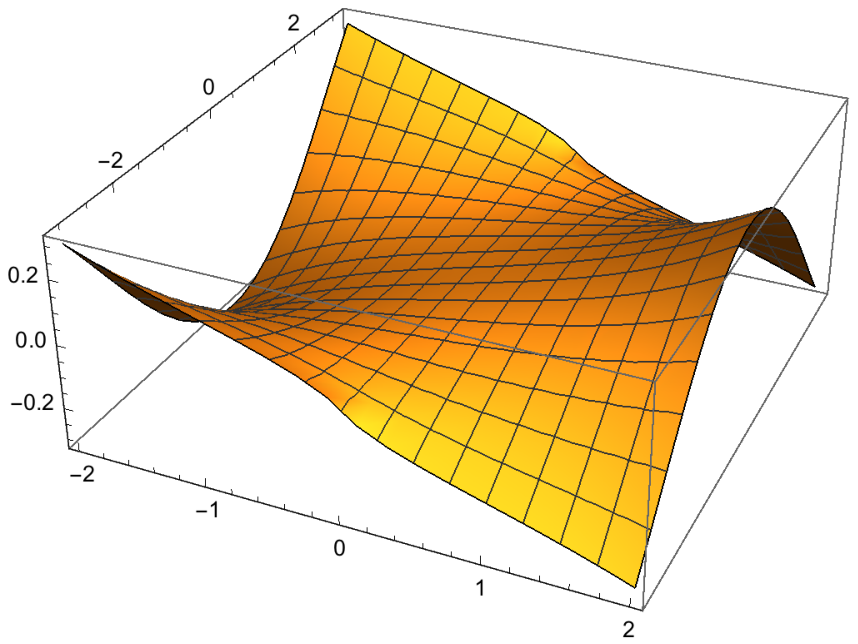}
\includegraphics[width=2.5in]{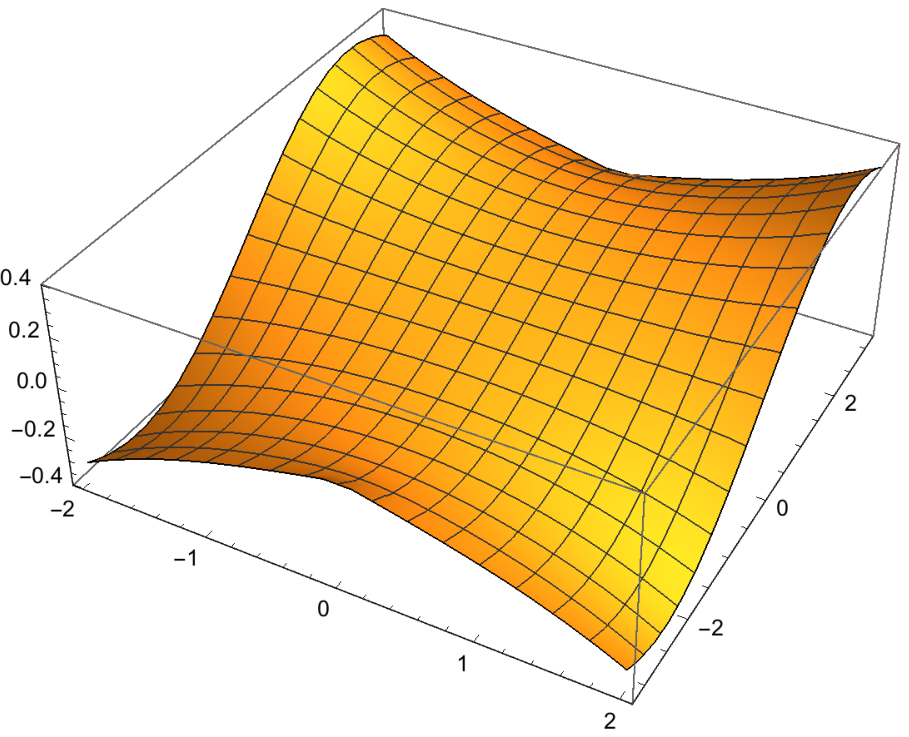}
\caption{Plots of $\Re (G(u))$ (left) and $\Im (G(u))$ (right) for $\sigma_0=0.1$,
with $u=x+iy$ in the ranges $x:[-2,2], y:[-\pi,\pi]$.}
\label{Fig:G}
\end{figure}

\subsection{The $G(u)$ power series}
By Theorem~\ref{thm:1}, the function $G(u)$ can be expanded in a power series around $u=0$ which is convergent for $|u|< \pi$. 
As advertised, the series contains only odd powers
\begin{equation}\label{gexp}
G(u) = \sigma_0 \sum_{k=0}^\infty a_k(\sigma_0) u^{2k+1}\,.
\end{equation}

The first few terms are
\begin{equation} \label{eq:gSABRseries}
G(u) = \frac{\pi}{2\sqrt2} \sigma_0 u \left( 1 + \frac{1}{48}(5 - \sigma_0^2) u^2 
+ \frac{23 - 110 \sigma_0^2 + 3 \sigma_0^4}{15360} u^4 + O(u^6) \right)\,.
\end{equation}

\subsection{Singularity structure} The function $G(u)$ has logarithmic
singularities on the imaginary axis of the form 
\begin{equation}
G(u) = 
\frac{\sigma_0}{2} \sqrt2 \cdot (u - u_\pm) \log\frac{1}{u-u_\pm} + \mbox{regular}\,,\quad u \to u_\pm = \pm i \pi \,.
\end{equation}

In order to study this singularity, take $u=iy$ along the imaginary
axis with $y<\pi$. Denoting the integration variable $s=it$ and neglecting the factor $h(s)$ which is regular as $s\to i\pi$, the integral is approximated as
\begin{equation}
\int_0^{y} \frac{i dt}{\sqrt{\cos y - \cos t}} = \int_0^y \frac{dt}{\sqrt{\cos t - \cos y}}
=\sqrt2 \log\frac{8}{\pi - y} + O(\pi-y)\,.
\end{equation}
The convergence domain of the Taylor series of $G(u)$ around $u=0$ is restricted by this singularity to the open disk $|u| < \pi$. See Fig.~\ref{Fig:strip}. 
The root test -- see Fig.~\ref{Fig:rt} -- confirms convergence of the series expansion
for $G(u)$ with a convergence radius $\pi$.

\begin{figure}[h]
\centering
\includegraphics[width=2.5in]{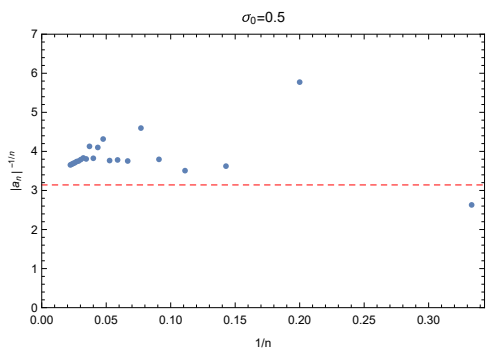}
\caption{Root test for the convergence of the series expansion (\ref{gexp})
for $G(u)$, showing the reduced coefficients $|a_n|^{-1/n}$ vs $1/n$. 
The horizontal line is at $\pi$.
$\sigma_0=0.5$.}
\label{Fig:rt}
\end{figure}

Application of ``transfer results" in Flajolet and Sedgewick (see equation (26) in Ch.VI.2 of \cite{Flajolet2008}) gives the leading large-$n$ order asymptotics of the coefficients in the series expansion (\ref{gexp})
\begin{equation}\label{asymcoefs}
a_k (\sigma_0) =  (-1)^{k} \frac{\sqrt2}{\pi^{2k} (2k)(2k+1) } + O(k^{-3}) \,.
\end{equation}

\vspace{0.5cm}

\subsection{Non-convergence (revisited)}

Recall from (\ref{Vdef}), dropping pre-factors:

\begin{equation}\label{Vexp}
V(T)= \int_0^\infty e^{-\frac{u^2}{2T}} g(u) \frac{du}{\sqrt{2\pi T}}  
 = \sqrt{\frac{T}{2\pi}} \sum_{k=0}^\infty a_k(\sigma_0) (2T)^k \Gamma(1+k).
\end{equation}
It's worth revisiting the non-convergence argument with the improved knowledge from (\ref{asymcoefs}). 
Now, the large-$k$ asymptotics of the coefficients have the form 
\begin{equation}\label{akasympt}
a_k (2T)^k k^k e^{-k} \sim \left( \frac{2k T}{\pi^2 e} \right)^k  \,,\quad
k \to \infty \,.
\end{equation}
Again the root test shows that the $T$-series for the ATM option price has zero radius of convergence.

\subsection{Non-convergence of the short-maturity expansion of the implied volatility function}
This is our title result. Consider the expansion for the implied volatility $\sigma_{\rm BS}(T) = \sum_{k=0}^\infty b_k T^k$. This is related to the ATM option price as\begin{equation}
\frac{1}{\sqrt{T}} C(K=S_0,T) = S_0 \frac{1}{\sqrt{T}}
\mbox{Erf}\left(\frac{\sigma_{\rm BS}(T) \sqrt{T}}{2\sqrt2}\right)\,.
\end{equation}

We prove that a finite convergence radius of the series for the implied variance $\sigma_{\rm BS}^2(T)$ implies a finite convergence radius of the option price 
$\frac{1}{\sqrt{T}} C(K=S_0,T)$. Since the latter series is non-convergent we conclude that the previous series must be non-convergent. 
The proof proceeds in two steps.

\textit{Step 1.} First we observe that the function $f(z)\equiv \frac{1}{\sqrt{z}} \mbox{Erf}( \sqrt{z})$ is entire. This follows from the application of the root test for convergence to its Taylor expansion
$f(z) = \frac{2}{\sqrt{\pi}}\sum_{n=0}^\infty (-1)^n \frac{1}{(2n+1) n!} z^n$. The root test gives that the convergence radius of this series expansion is infinite, which proves that $f$ is entire.

\textit{Step 2. }
Suppose $g(z)$ is an analytic function with finite convergence radius $|z|<R$ and denote $R_0>0$ the radius of the largest disk centered on $z=0$ which is mapped by $z \to g(z)$ to a region which does not include the origin.

Then $h(z) \equiv \frac{1}{\sqrt{z}} \mbox{Erf}[g(z) \sqrt{z}]$ has the convergence radius $\min(R,R_0)$. This follows from writing $h= \sqrt{g^2} \times f \circ (g^2 z)$ as the composition of $f(z)$ with $g^2(z) z$. The analyticity domain of $h$ is limited either by the analyticity domain of $g$, or by the branch cut of $\sqrt{g^2(z)}$ starting at the point where $g(z)=0$, and is thus the same as the disk $|z|<\min(R,R_0)$.

From the non-convergence of the series for $\frac{1}{\sqrt{T}} V(T)$, it follows that the series for $\sigma_{\rm BS}(T)$ also has zero convergence radius.

\section{Numerical illustrations and error estimates}
\label{sec:5}

The asymptotic nature of the $T$-expansion of option prices and implied volatility for the SABR model requires a careful application for practical use. 
The $T$-series (\ref{Vexp}) for the value function $V(T)$ must be truncated to some finite order $N$. Two issues must be addressed in relation to the use of asymptotic series: i) what is the optimal truncation order $N_*$, and ii) estimate the best attainable error of the series $\varepsilon_* = \inf_N|\varepsilon_N(T)|$,
where $\varepsilon_N(T) = V(T) - V_N(T)$ is the truncation error.

We illustrate these issues on the example of the value function $V_0(T)$ defined by taking $h(s)=1$. This situation corresponds to the small volatility regime $\sigma_0 \ll 1$, when the $h(s)$ factor is well approximated by a constant for 
$\sigma_0 \sinh s \ll 1$. 
For this case the integral in (\ref{galtdef}) can be evaluated exactly as
\begin{equation}\label{g0exact}
g_0(u) \equiv \sinh u \int_0^u \frac{ds}{\sqrt{\cosh u - \cosh s}} 
= -2i \sinh u \frac{F(\frac12 iu | -\mbox{cosech}^2(u/2))}{\sqrt{\cosh u-1}} \,,
\end{equation}
where $F(\phi| m) = \int_0^\phi (1 - m \sin^2\theta)^{-1/2} d\theta$ is the elliptic integral of the first kind.
The value function $V_0(T)$ is defined by (\ref{Vdef}) with the replacement 
$g(u) \to g_0(u)$.

Figure~\ref{Fig:2} shows the numerical evaluation of $V_0(T)$ from the series expansion (\ref{Vexp}) truncated to order $n$, plotted as a function of $n$, compared with numerical evaluation of $V_0(T)$ using the exact result (\ref{g0exact}) for the integrand. The different plots correspond to several values of the $T$ parameter.

\begin{figure}[h]
\centering
\includegraphics[width=2.5in]{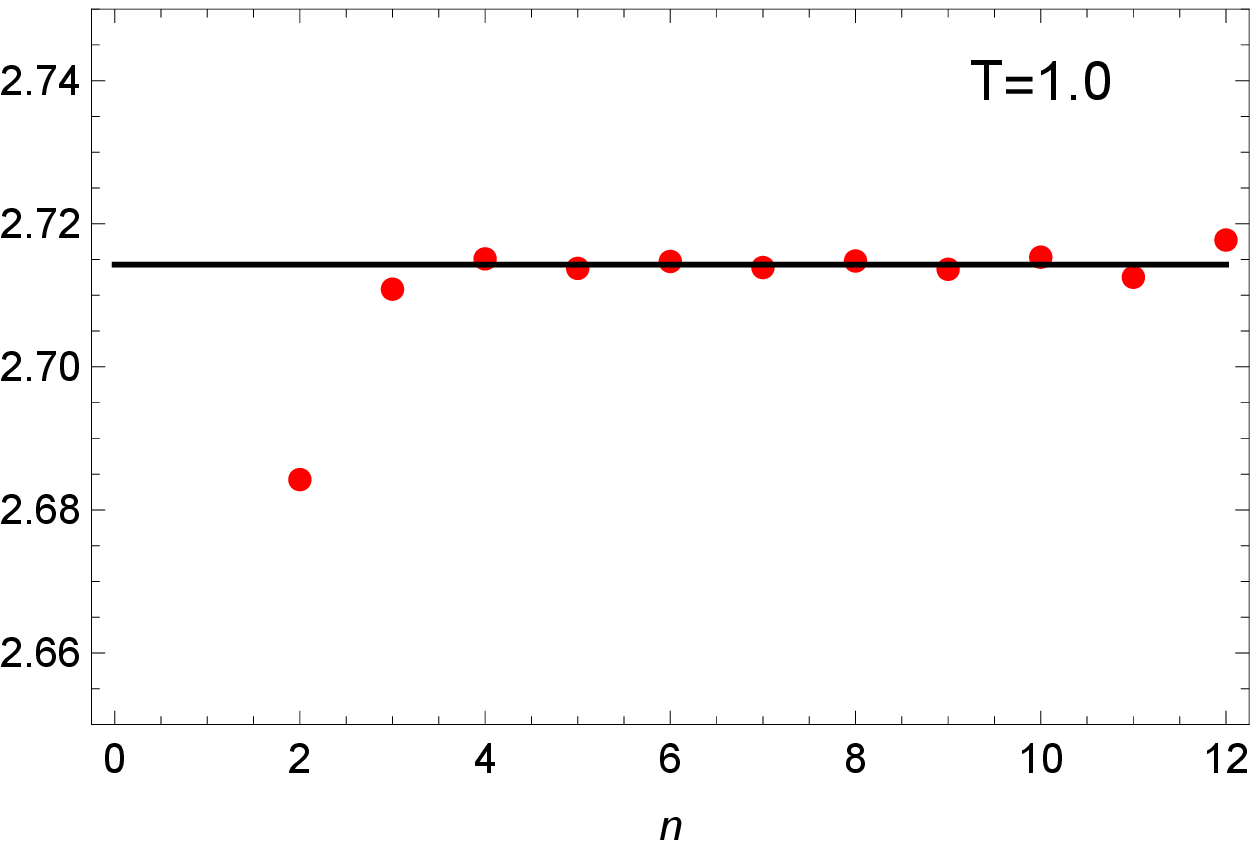}
\includegraphics[width=2.5in]{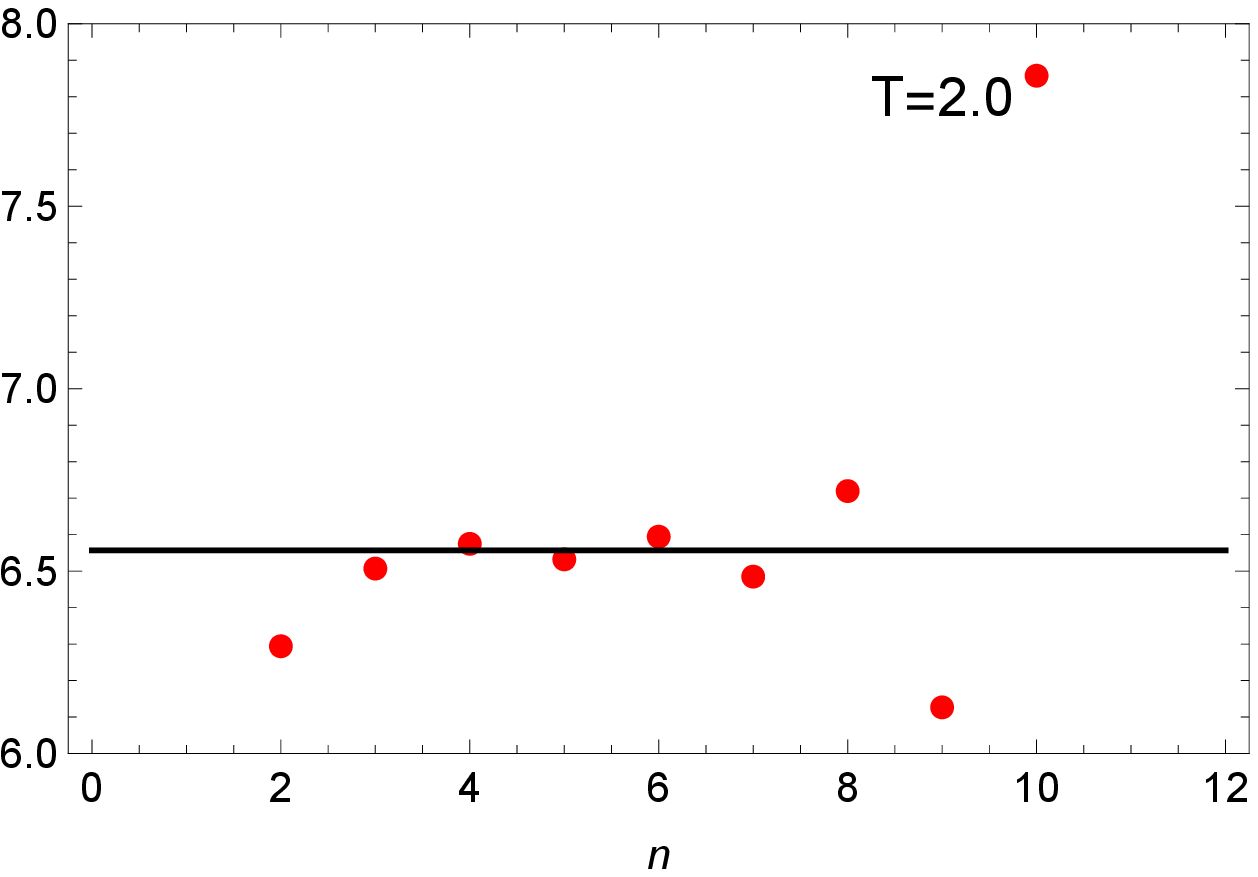}
\includegraphics[width=2.5in]{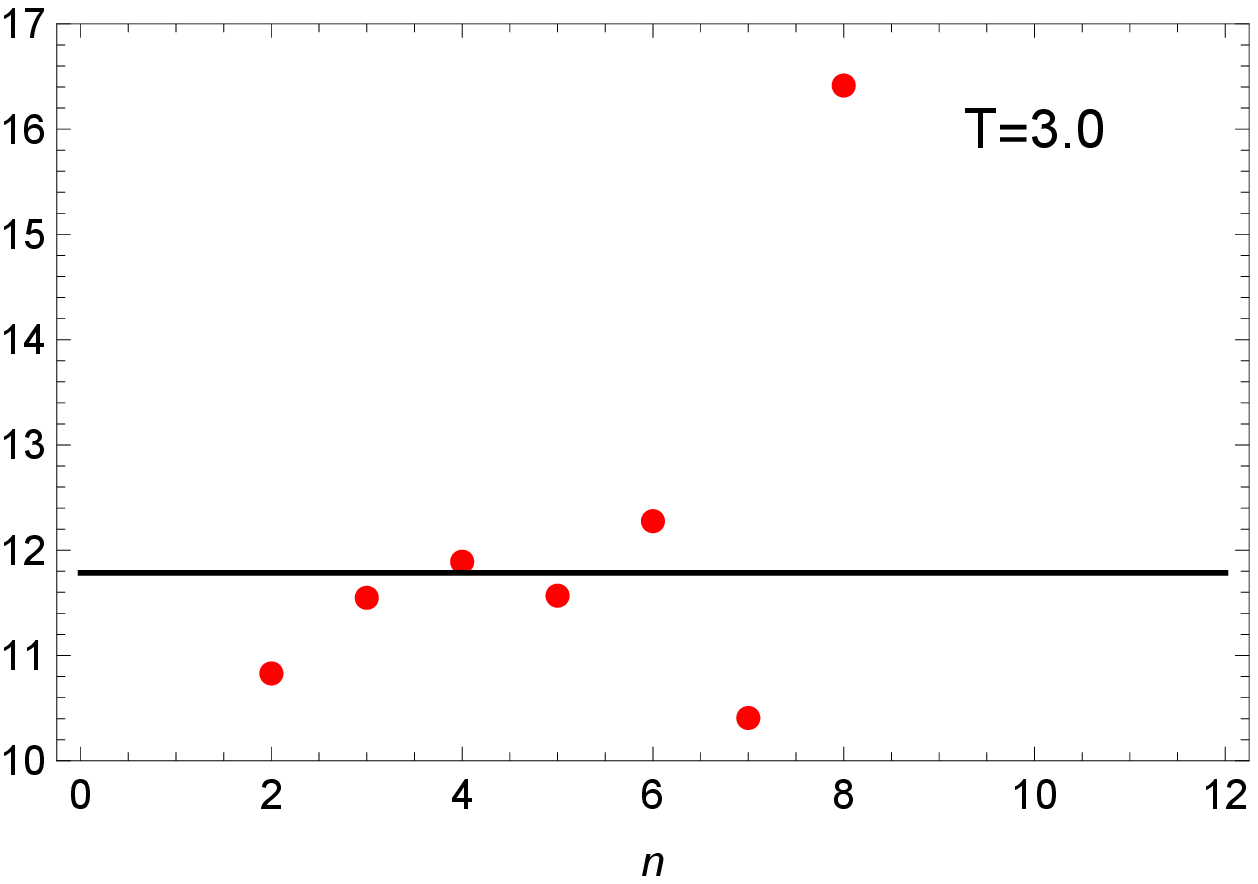}
\includegraphics[width=2.5in]{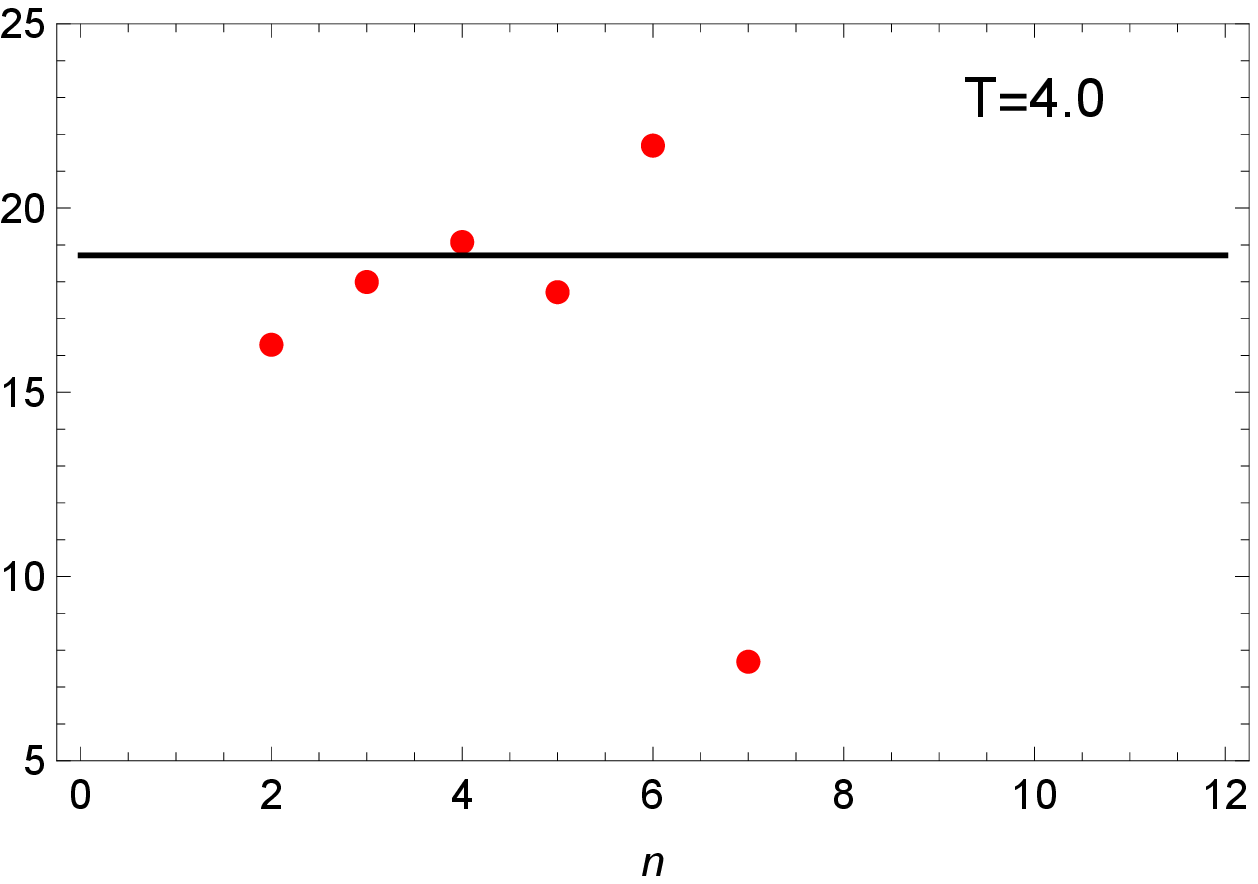}
\caption{Dots: The partial sum of $V_0(T)$ from the series expansion (\ref{Vexp}) keeping terms up to $O(T^n)$, vs $n$. 
Horizontal black line: numerical evaluation of $V_0(T)$.}
\label{Fig:2}
\end{figure}

We note from these plots that the truncated series agrees best with the numerical evaluation at that order $N_*$ where the last neglected term $V_{N_*+1}-V_{N_*}$ reaches a minimum. This agrees with the typical behavior of asymptotic series \cite{Boyd}. The optimal truncation order $N_*$ can be estimated from the large-order asymptotics (\ref{akasympt}) of the coefficients as 
$N_* \sim \frac{e\pi^2}{2T}$. 
$N_*$ decreases as $T$ increases, and approaches unity for 
$T\sim \frac12 e\pi^2 \simeq 13.4$. 
These arguments show that the asymptotic series has a maximum range of validity and breaks down for too large $T$.

An upper bound on the optimal truncation error of the series can be obtained from a bound on the contribution to the integral (\ref{Vdef}) from the region $u>\pi$. 
Denoting 
\begin{equation}
V_{\rm err}(T) = \int_\pi^\infty e^{-\frac{u^2}{2T}} g(u) \frac{du}{\sqrt{2\pi T}}
\end{equation}
we have 
\begin{equation}\label{Verrbound}
|V_{\rm err}(T) | \leq \sqrt2 \sigma_0 
\left\{ \sqrt{\frac{T}{2\pi} } e^{-\frac{\pi(\pi-T)}{2T}}
+ \frac12 e^{T/8} T N\left( \frac{\frac12 T - \pi}{\sqrt{T}} \right) \right\}\,.
\end{equation}
with $N(x) = \int_{-\infty}^x e^{-\frac12 t^2} \frac{dt}{\sqrt{2\pi}}$ the CDF of the standard normal distribution.

\begin{figure}[h]
\centering
\includegraphics[width=3.0in]{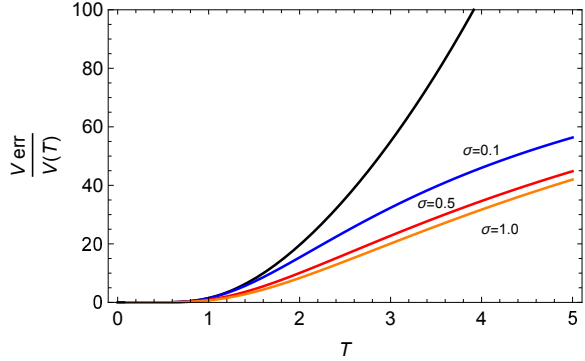}
\caption{The best attainable error (in percent), measured by the ratio of the contribution to $V(T)$ from the region $u>\pi$ where the $g(u)$ series is not convergent to the complete $V(T)$ integral. The relative error increases with maturity $T$, and decreases with $\sigma_0$. The black curve corresponds to
$V_0(T$) and the colored curves correspond to $\sigma_0=0.1, 0.5, 1.0$.}
\label{Fig:err}
\end{figure}

\begin{proof}
The error bound (\ref{Verrbound}) follows from an upper bound on $g(u)$:
for any $\sigma_0>0$, we have
\begin{equation}\label{gbound}
g(u) \leq \sigma_0 \sqrt2 u \cosh(u/2)\,, \quad u \geq 0\,.
\end{equation}
It is sufficient to prove this bound for $g_0(u)$ defined by setting $h(s)\to 1$ in the
definition (\ref{galtdef}), since we have $|h(s)| \leq \frac12 \sigma_0$. 
The argument of the square root in the denominator is a concave function of $s$ and thus is bounded from below as \newline
$\cosh u - \cosh s \geq (1 - s/u) [\cosh(u)-1]$. This gives the upper bound
\begin{equation}
\int_0^u \frac{ds}{\sqrt{\cosh u -\cosh s}} \leq \frac{2u}{\sqrt{\cosh u - 1}}
\end{equation}
This yields the bound (\ref{gbound}).

The bound (\ref{gbound}) can be used to obtain an upper bound on the truncation error but the analytical result is lengthy. The simpler result (\ref{Verrbound})
is obtained from a weaker bound $g(u) \leq \sqrt2 \sigma_0 u e^{u/2}$.

\end{proof}

The bound  (\ref{Verrbound}) shows that the optimal truncation error is 
exponentially suppressed as $\sim e^{-O(1)/T}$ for $T<\pi$. 
For $T>\pi$ the error may be still small, but the
bound (\ref{Verrbound}) is not strong enough to guarantee it. 
Numerical simulations
suggest that this error bound overestimates the actual error for larger $\sigma_0$.

Numerical evaluations of the error term $V_{\rm err}(T)$ in Fig.~\ref{Fig:err}
confirm that the relative error is negligibly small for $T<1.0$ and increases rapidly
with $T$. We also note that the error decreases with $\sigma_0$. This effect is due to the factor $h(s)$ which is fast oscillating at a rate which increases with $\sigma_0$. For very large $\sigma_0$ the oscillations have an effect of suppressing the contribution from the integration region $u>\pi$, and thus decreases the error of the asymptotic series.

\section{The large $\sigma_0$ scaling limit} \label{sec:6}

In the large volatility limit $\sigma_0\to \infty$ the function $g(u)$ approaches a simple form
$\lim_{\sigma_0\to \infty} g(u) = \frac{\pi}{\sqrt2} \cosh(u/2) \equiv g_\infty(u)$.
This follows from the limit in distribution sense 
$\lim_{\epsilon\to 0} \frac{\sin(x/\epsilon)}{\pi x} = \delta_+(x)$. 
Here $\delta_+(x)$ is defined by $\int_0^u \delta_+(x) f(x) dx = \frac12 f(0)$, with $f(x)$ some test function defined on the positive axis. Taking $h(s) \to \pi \frac{\sigma_0}{2} \delta_+(s)$ into the definition (\ref{galtdef}),  the integral is trivially evaluated with the result shown.

Taking  $g(u)\to g_\infty(u)$ in (\ref{CATM}) gives $\lim_{\sigma_0\to \infty} C(K=S_0)=S_0$. (The exchange of limit and integration is justified by the Lebesgue dominated convergence theorem, using that $g(u)$ is bounded from above as shown in (\ref{gbound}).)
What is the approach to this limit? The answer to this question is related to the 
$\sigma_0\to \infty$ asymptotics of the ATM implied volatility $\sigma_{\rm BS}(0,T)$.
This asymptotics takes a simple form when considered at fixed product $\tau=\frac12 \sigma_0 T$. 
Expressed in terms of $\tau$, the Black-Scholes formula gives
\begin{equation}
S_0 - C\left( K=S_0,T= \frac{2 \tau}{\sigma_0} \right) = 2S_0  \, N\left( -\sqrt{\frac12 \sigma_0 \tau}  \,\,
\Sigma_{BS} \left( \frac{2 \tau}{\sigma_0} , \sigma_0 \right) \right)
\end{equation}
where $\Sigma_{\rm BS}(T,\sigma_0)$ is defined in Eq.~(\ref{CapSigdef}). 

The large $\sigma_0$ asymptotics of the implied volatility function $\Sigma_{BS}(T,\sigma_0)$ turns out to depend only on $\tau$, and has a calculable form, given by the following result.

\begin{proposition}
We have the limit
\begin{equation}\label{scalinglimit}
\lim_{\sigma_0\to \infty} \frac{1}{\sigma_0}\log \left[ S_0 - 
        C \left(S_0,\frac{2 \tau}{\sigma_0}\right) \right] = -\frac14 
\hat\Sigma^2_{\rm BS}(\tau) \, \tau
\end{equation}
with
\begin{equation}\label{hatCapSig}
\hat\Sigma^2_{\rm BS}(\tau) = \frac{\sin 2\lambda}{\lambda} - \frac12 ( 1 + \cos(2\lambda) )
\end{equation}
where $\lambda = \lambda(\tau)$ is the solution of 
\begin{equation}\label{2b}
\frac{\lambda}{\cos\lambda} = \tau \,.
\end{equation}
\end{proposition}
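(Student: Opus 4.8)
The statement is a large-deviation estimate, and the plan is to extract the rate by a saddle-point analysis of the exact ATM representation (\ref{CATM}). At the money the time value coincides with the option value, $C(S_0,T)=V(S_0,T)$, so (\ref{CATM}) applies. Since $\lim_{\sigma_0\to\infty}g(u,\sigma_0)=g_\infty(u)=\tfrac{\pi}{\sqrt2}\cosh(u/2)$, and the elementary Gaussian integral behind the remark $\lim_{\sigma_0\to\infty}C(K=S_0)=S_0$ is just $\tfrac{2e^{-T/8}}{\pi^{3/2}\sqrt T}\int_0^\infty e^{-u^2/2T}g_\infty(u)\,du=1$, I would first write
\[
S_0-C(S_0,T)=S_0\,\frac{2e^{-T/8}}{\pi^{3/2}\sqrt T}\int_0^\infty e^{-u^2/2T}\,\Delta g(u,\sigma_0)\,du\,,\qquad \Delta g:=g_\infty-g\,,
\]
the subtraction being legitimate by the bound (\ref{gbound}) and dominated convergence. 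Setting $T=2\tau/\sigma_0$, the prefactor is a power of $\sigma_0$ and contributes nothing to $\lim\tfrac1{\sigma_0}\log(\cdot)$, so the task reduces to finding the exponential order in $\sigma_0$ of $\int_0^\infty e^{-\sigma_0 u^2/(4\tau)}\Delta g(u,\sigma_0)\,du$.

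Next I would pin down the large-$\sigma_0$ behaviour of the integrand. Using $\tfrac{\sin(\frac{\sigma_0}{2}\sinh s)}{\sinh s}=\int_0^{\sigma_0/2}\cos(k\sinh s)\,dk$ and recognising $g_\infty$ as the $k\to\infty$ endpoint of the same family, one gets the exact identity
\[
\Delta g(u,\sigma_0)=\sinh u\int_{\sigma_0/2}^\infty dk\int_0^u\frac{\cos(k\sinh s)\,ds}{\sqrt{\cosh u-\cosh s}}\,.
\]
For large $k$ the inner $s$-integral has a monotone phase $k\sinh s$, so its leading part comes from the endpoint $s=u$ where the square-root singularity sits; a Fresnel evaluation there gives the inner integral $\sim\sqrt{\pi/(k\sinh u\cosh u)}\,\cos(k\sinh u-\tfrac\pi4)$, after which the $k$-integral is dominated by its lower endpoint $k=\sigma_0/2$ and
\[
\Delta g(u,\sigma_0)\ \sim\ -\sqrt{\tfrac{2\pi}{\sigma_0}}\ \frac{\sin\!\big(\tfrac{\sigma_0}{2}\sinh u-\tfrac\pi4\big)}{\sqrt{\sinh u\,\cosh u}}\,,
\]
uniformly for $u$ in a complex neighbourhood of the positive real axis bounded away from $0$ and from the singularity at $i\pi$ (the required analyticity is exactly what Theorem~\ref{thm:1}, and its obvious extension to $\Delta g$, provides). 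The structurally important output is that $\Delta g$ oscillates with phase $\tfrac{\sigma_0}{2}\sinh u$.

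Writing $\sin(\cdot)=\tfrac1{2i}(e^{i\cdot}-e^{-i\cdot})$, the integral becomes $2\,\Re\!\int A(u)e^{-\sigma_0\Phi(u)}\,du$ (the second piece being the complex conjugate), with $\Phi(u)=\tfrac{u^2}{4\tau}-\tfrac i2\sinh u$. The saddle equation $\Phi'(u)=\tfrac{u}{2\tau}-\tfrac i2\cosh u=0$ has the pure-imaginary solution $u_\star=i\lambda$ with $\lambda=\tau\cos\lambda$ --- which is precisely the defining relation (\ref{2b}). Deforming onto the steepest-descent path through $u_\star$ then yields
\[
-\lim_{\sigma_0\to\infty}\frac1{\sigma_0}\log\!\Big[S_0-C\big(S_0,\tfrac{2\tau}{\sigma_0}\big)\Big]=\Re\,\Phi(i\lambda)=\tfrac12\sin\lambda-\tfrac{\lambda^2}{4\tau}\,.
\]
I expect \textbf{the main obstacle to be the rigorous justification of this last step}: the asymptotic form of $\Delta g$ degenerates in a boundary layer $u=O(1/\sigma_0)$ at the origin, where a naive count would make its contribution dominate --- in fact it is beyond all polynomial orders, because every Abel-regularised moment $\int_0^\infty t^{2m+1}J_0(t)\,dt$ vanishes --- and $|e^{-\sigma_0\Phi(u)}|$ grows along arcs at infinity, so the contour cannot be deformed globally and one must work with a finite truncation plus tail bounds while retaining the exponentially small leading term; a uniform error for the $\Delta g$-asymptotics along the whole descent path is also needed.

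Finally, insert $\tau=\lambda/\cos\lambda$ into the rate: $\tfrac12\sin\lambda-\tfrac{\lambda^2}{4\tau}=\tfrac12\sin\lambda-\tfrac14\lambda\cos\lambda$. Independently, $\tfrac\tau4\hat\Sigma^2_{\rm BS}(\tau)=\tfrac\tau4\big(\tfrac{\sin2\lambda}{\lambda}-\tfrac12(1+\cos2\lambda)\big)=\tfrac14(2\sin\lambda-\lambda\cos\lambda)$ after $\sin2\lambda=2\sin\lambda\cos\lambda$, $1+\cos2\lambda=2\cos^2\lambda$ and $\tau=\lambda/\cos\lambda$; the two coincide, which is (\ref{scalinglimit})--(\ref{hatCapSig}). (Internal checks: $\lambda\to0$ as $\tau\to0$ gives rate $\to\tau/4$, matching $\hat\Sigma^2_{\rm BS}(0)=1$ in (\ref{Scaled}); $\lambda\to\pi/2$ as $\tau\to\infty$ gives rate $\to 1/2$.) A possibly cleaner alternative bypasses the integral: from the uncorrelated SABR dynamics, $S_0-C=2S_0\,\mathbb{E}\big[N(-\tfrac12\sqrt{v_T})\big]$ with $v_T=\sigma_0^2\int_0^T e^{2Z_t-t}\,dt$; after the rescaling $t=2\tau s/\sigma_0$, Schilder's large-deviation principle gives $-\lim\tfrac1{\sigma_0}\log[S_0-C]=\inf_{\psi}\big\{\tfrac1{4\tau}\int_0^1(\psi'_s)^2\,ds+\tfrac\tau4\int_0^1 e^{2\psi_s}\,ds\big\}$ over paths with $\psi_0=0$, whose Euler--Lagrange equation is the Liouville equation $\psi''=\tau^2 e^{2\psi}$ with the natural boundary condition $\psi'(1)=0$; its solution $e^{-\psi_s}=\cos(\lambda(s-1))/\cos\lambda$ again forces $\lambda/\cos\lambda=\tau$ and has action $\tfrac12\sin\lambda-\tfrac14\lambda\cos\lambda$, reproducing the same result.
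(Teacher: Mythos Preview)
Your main line of argument is essentially identical to the paper's Appendix: you form $\Delta g=g_\infty-g$, obtain its leading large-$\sigma_0$ asymptotics (your $-\sqrt{2\pi/\sigma_0}\,\sin(\tfrac{\sigma_0}{2}\sinh u-\tfrac\pi4)/\sqrt{\sinh u\cosh u}$ is just the paper's $\cos(\tfrac{\sigma_0}{2}\sinh u+\tfrac\pi4)$ form rewritten), split the cosine into exponentials, and evaluate by steepest descent through the saddle $u=i\lambda$ of $\Phi(u)=\tfrac{u^2}{4\tau}-\tfrac{i}{2}\sinh u$, arriving at the same rate $\tfrac14(2\sin\lambda-\lambda\cos\lambda)$. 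The only cosmetic difference is that you derive the $\Delta g$ asymptotic via the representation $h(s)=\int_0^{\sigma_0/2}\cos(k\sinh s)\,dk$ and an endpoint Fresnel, whereas the paper invokes Olver's Laplace-method theorem directly; the content is the same. Your discussion of the boundary layer at $u=0$ and the need to truncate the contour at infinity is more explicit than the paper's, but the paper's treatment is also formal on these points.

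Your alternative route via Schilder's theorem is genuinely different from anything in the paper and is worth highlighting. Starting from $S_0-C=2S_0\,\mathbb{E}\big[N(-\tfrac12\sqrt{v_T})\big]$ with $v_T=\sigma_0^2\int_0^T e^{2Z_t-t}\,dt$, the Brownian rescaling converts the problem into a sample-path LDP, and the Varadhan/Laplace principle reduces the rate to the variational problem $\inf_\psi\{\tfrac{1}{4\tau}\int_0^1(\psi')^2+\tfrac{\tau}{4}\int_0^1 e^{2\psi}\}$; the Liouville ODE $\psi''=\tau^2 e^{2\psi}$ with $\psi(0)=0,\ \psi'(1)=0$ then recovers $\lambda/\cos\lambda=\tau$ and the same action. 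This approach bypasses both the McKean-kernel integral representation and the oscillatory $\Delta g$ asymptotics entirely, trading an oscillatory saddle-point analysis for a real variational problem; it is conceptually cleaner for the rate function, though it would not directly give the subexponential prefactor that the paper's method yields.
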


\begin{proof}
See the Appendix.
\end{proof}

The result (\ref{hatCapSig}) reproduces the asymptotic implied volatility in the
$\beta=1$ SABR model in the uncorrelated limit from \cite{PES}.

We show next that the convergence properties of the series expansion of $\hat \Sigma_{\rm BS}(\tau)$ are better behaved than expected from the non-convergence of the implied volatility $\sigma_{\rm BS}(0,T)$.
For simplicity we consider the series expansion of the implied variance
\begin{equation}\label{anseries}
\hat \Sigma_{BS}^2(\tau) = \sum_{n=0}^\infty a_n \tau^n\,.
\end{equation}
The convergence properties of this expansion are given by the following result.

\begin{proposition}\label{prop:1}
The series expansion (\ref{anseries}) 
converges for $|\tau| \leq R_\tau$ with $R_\tau = \frac{y_0}{\cosh y_0} \sim
0.662743$ and $y_0=1.19968$ the positive solution of the equation
$y \tanh y = 1$.
\end{proposition}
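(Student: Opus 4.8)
The plan is to locate the singularity of $\hat\Sigma_{\rm BS}^2(\tau)$, viewed as a function of the complex variable $\tau$, that is closest to the origin, and to identify its modulus with the claimed $R_\tau$. By (\ref{hatCapSig}) and (\ref{2b}), $\hat\Sigma_{\rm BS}^2$ is built out of $\lambda(\tau)$, the branch of the solution of $\lambda/\cos\lambda=\tau$ with $\lambda(0)=0$. Since $\sin(2\lambda)/\lambda$, $\cos(2\lambda)$ are entire in $\lambda$, the only way $\hat\Sigma_{\rm BS}^2(\tau)$ can fail to be analytic is through a failure of $\lambda(\tau)$ to be analytic, i.e. at a branch point of the inverse function. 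First I would set $\Phi(\lambda)\equiv\lambda/\cos\lambda$; this is holomorphic near the real $\lambda$-axis away from $\pm\pi/2$, with $\Phi(0)=0$, $\Phi'(0)=1$, so $\lambda(\tau)=\Phi^{-1}(\tau)$ is analytic in a neighbourhood of $\tau=0$ and continues analytically as long as $\Phi'(\lambda)\neq 0$ along the continuation path. The nearest singularity of $\lambda(\tau)$ is therefore the image under $\Phi$ of the zero of $\Phi'$ that is reached first as $\tau$ grows from $0$.

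Next I would solve $\Phi'(\lambda)=0$. Differentiating, $\Phi'(\lambda)=\dfrac{\cos\lambda+\lambda\sin\lambda}{\cos^2\lambda}$, so the condition is $\cos\lambda+\lambda\sin\lambda=0$, i.e. $\lambda\tan\lambda=-1$. On the real axis this has no small solution, but on the imaginary axis $\lambda=iy$ gives $\cos\lambda=\cosh y$, $\sin\lambda=i\sinh y$, hence $\cos\lambda+\lambda\sin\lambda=\cosh y - y\sinh y$, and the condition becomes $y\tanh y=1$, with smallest positive root $y_0=1.19968\ldots$. At this point $\Phi(iy_0)=iy_0/\cosh y_0$, which is purely imaginary of modulus $R_\tau=y_0/\cosh y_0\approx 0.662743$. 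I would then argue that this is genuinely the \emph{closest} branch point: along the real $\tau$-axis $\Phi$ maps $(-\pi/2,\pi/2)$ onto all of $\mathbb{R}$ with $\Phi'>0$ there (since $\cos\lambda+\lambda\sin\lambda>0$ on that interval — easy to check by examining the two sides separately on $[0,\pi/2)$ and using oddness), so $\lambda(\tau)$ stays analytic for all real $\tau$; and any other zero of $\Phi'$ has modulus $\geq$ that of the first one hit by the continuation, which by symmetry ($\Phi$ odd, real on $\mathbb R$, real on $i\mathbb R$) is $\lambda=iy_0$. A short monotonicity/convexity argument on $\cosh y - y\sinh y$ and on the real-axis behaviour suffices to exclude nearer candidates. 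This pins the radius of convergence of (\ref{anseries}) at exactly $R_\tau$ by the standard identification of the radius of convergence with the distance to the nearest singularity.

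Finally I would address convergence \emph{on} the boundary circle $|\tau|=R_\tau$, since the statement claims convergence for $|\tau|\le R_\tau$. The singularity of $\lambda(\tau)$ at $\tau=\pm iR_\tau$ is a square-root branch point (because $\Phi'$ has a simple zero there and $\Phi''(iy_0)\neq 0$), so near $\tau=iR_\tau$ one has $\lambda(\tau)-iy_0\sim c\sqrt{R_\tau-i\tau'}$ and correspondingly $\hat\Sigma_{\rm BS}^2(\tau)$ has a square-root-type singularity; its Taylor coefficients $a_n$ then decay like $n^{-3/2}R_\tau^{-n}$ (a transfer-theorem estimate in the style of Flajolet–Sedgewick already invoked in the paper for (\ref{asymcoefs})), so $\sum a_n\tau^n$ converges absolutely for $|\tau|=R_\tau$. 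One should also check $\hat\Sigma_{\rm BS}^2$ has no singularity on $|\tau|=R_\tau$ other than $\pm iR_\tau$, which again follows from $\Phi'\neq 0$ strictly inside the relevant region except at $\lambda=\pm iy_0$.

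The main obstacle I anticipate is the global injectivity/reachability argument in the second paragraph: one must be sure that, as $\tau$ ranges over the open disc $|\tau|<R_\tau$, the continued $\lambda(\tau)$ never leaves the region where $\Phi$ is a biholomorphism, i.e. that no zero of $\cos\lambda+\lambda\sin\lambda$ or pole of $\Phi$ (at $\lambda=\pm\pi/2$) gets in the way before $|\tau|$ reaches $R_\tau$. Concretely, one needs $\min\{|\Phi(\lambda)|:\Phi'(\lambda)=0\}=R_\tau$ and that the preimage $\Phi^{-1}(\{|\tau|<R_\tau\})$ containing $0$ avoids $\lambda=\pm\pi/2$. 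I would handle this by a direct estimate: show $|\Phi(\pm\pi/2\pm i\eta)|>R_\tau$ for all $\eta$ and, for the other zeros of $\Phi'$, show their $\Phi$-images have modulus $>R_\tau$, reducing everything to real one-variable inequalities about $\cosh y - y\sinh y$ and $\cos x + x\sin x$.
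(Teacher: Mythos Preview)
Your proposal is correct and follows essentially the same route as the paper: write $\hat\Sigma_{\rm BS}^2(\tau)=g(\lambda(\tau))$ with $g$ entire, locate the branch points of the inverse $\lambda(\tau)$ as the images of the critical points of $\Phi(\lambda)=\lambda/\cos\lambda$, and identify the dominant ones on the imaginary axis at $\lambda=\pm i y_0$ with $y_0\tanh y_0=1$, giving $R_\tau=y_0/\cosh y_0$. The paper invokes the same inversion-at-a-critical-point principle (citing Simon and Flajolet--Sedgewick), classifies the critical points into the imaginary-axis pair and the real solutions of $\tan z_k=-1/z_k$, and asserts the latter map farther from the origin.

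Two differences are worth noting. First, you are more explicit than the paper about the global ``reachability'' issue (that no pole of $\Phi$ or other critical value intrudes before $|\tau|$ reaches $R_\tau$); the paper simply states that the real critical points are mapped farther away, without the monotonicity estimates you outline. Second, you go beyond the paper by addressing convergence \emph{on} the circle $|\tau|=R_\tau$: the statement claims $|\tau|\le R_\tau$, but the paper's own proof only establishes the radius of convergence and says nothing about the boundary. Your square-root transfer argument (coefficients $\sim n^{-3/2}R_\tau^{-n}$) is the natural way to close this gap.
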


\begin{proof}

The function $\hat\Sigma_{\rm BS}^2(\tau)= g(\lambda(\tau))$ is the composition of two functions, with $g(\lambda)$ defined by the function on the
right-hand side of (\ref{hatCapSig}) and $\lambda(\tau)$ the inverse of $\tau(\lambda)$ in (\ref{2b}). 

The function $\lambda(\tau)$ has two branch points of order $\frac12$ on the imaginary axis at $\tau_\pm = \pm i \tau_0$ where $\tau_0 = \frac{y_0}{\cosh y_0}
\simeq 0.663$ and $y_0=1.19968$ is the positive solution of the equation $y_0 \tanh y_0 = 1$. This result follows from a study of the critical points of $f(z):= \frac{z}{\cos z}$. It is known, see e.g. Theorem 3.5.1 in \cite{Simon}, that the inversion of a complex function $w=f(z)$ around a critical point $f'(z_0)=0$, gives a multivalued function and thus $z(w)$ has a branch point at $w_0=f(z_0)$. See also
Theorem VI.6 in Flajolet and Sedgewick \cite{Flajolet2008}. For a similar application to a more complex case see Sec.~2 in \cite{Nandori2020}.

The critical points of $f$ are solutions of the equation $f'(z) = \frac{1}{\cos z} (1 + z \tan z)=0$. This equation has two types of solutions:

i) Solutions on the imaginary axis. They are at $z_\pm = \pm i y_0$ with $y_0=1.19968$ the positive solution of $y_0 \tanh y_0=1$. 
These points are mapped to $\tau_\pm=\pm i\tau_0$ with $\tau_0 = \frac{y_0}{\cosh y_0} \simeq 0.663$.

ii) Infinitely many solutions along the real axis at $z_k$ given by the solutions of
$\tan z_k = -\frac{1}{z_k}$ with $k\in \mathbb{Z}$. These solutions are mapped to $\tau_k=f(z_k)$ which are further away from origin than $\tau_0$.

In conclusion the dominant singularities of $\lambda(\tau)$ are the two branch points at $\tau_\pm = \pm i\tau_0$. 
Since $g(\lambda)$ is entire, the singularities of $\hat\Sigma_{\rm BS}^2(\tau)$
are the two branch points at $\tau_\pm = \pm i \tau_0$, which thus determine the convergence radius of the series (\ref{anseries}).

\end{proof}

While we have worked throughout with $\omega=1$, under general $\omega$, 
the expansion is in powers of $\tau = \frac12 \omega \sigma_0 T$. 
Then Prop.~\ref{prop:1} implies a corresponding convergence radius for the $T$-series expansion of the implied variance  $T_c = 1.32/(\omega \sigma_0)$.

\section{Summary and discussion}

We studied the nature of the short maturity expansion for option prices and implied volatility in the uncorrelated log-normal SABR model, and showed that in general the expansion diverges for any maturity. This implies that the series expansion is asymptotic, and that its application for numerical evaluation has to consider issues such as optimal truncation order. The optimal truncation error is
exponentially suppressed for sufficiently small maturity $\omega^2 T < \pi$.
For these maturities the first few terms of the asymptotic series give generally
a good approximation of the exact result, but for longer maturity numerical approaches are preferable to the series expansion evaluation.

Despite the asymptotic nature of the short maturity expansion in this model, it is surprising that essentially a subset of terms in the short maturity expansion of the implied volatility converges, with a finite convergence radius. 
This subset corresponds to the large $\sigma_0$ scaling limit at $\omega=1$ or more generally to the limit $\omega \to 0, \sigma_0\to \infty$ at fixed product $\omega \sigma_0$, and can be summed in closed form.

Although the analysis focused on the ATM option prices and implied volatility, the methods used are more general and may be used also for the study of the short maturity expansion at fixed strike, or in various regimes of joint small maturity-small log-strike. 
\vspace{0.2cm}

\textbf{Acknowledgements.} We thank an anonymous referee for helpful suggestions that improved the presentation of the paper. 

\clearpage

\section*{Appendix -- Large $\sigma_0$ asymptotics}
\label{sec:app}

There are two components to the asymptotics for the $\sigma_0\to \infty$ limit. 
First, we have the large $\sigma_0$ asymptotics of $g(u)$.

\begin{proposition}
The leading $\sigma_0\to \infty$ asymptotics of the function $g(u)$ is
\begin{equation}\label{ansatz}
g_\infty(u) - g(u) \simeq \frac{2\sqrt{\pi}}{\sqrt{\sigma_0 \sinh(2u)}}
\cos\left(\frac12 \sigma_0 \sinh u +\frac{\pi}{4} \right) + O(\sigma_0^{-3/2}) \,,\quad \sigma_0 \to \infty 
\end{equation}
where $g_\infty(u) \equiv \frac{\pi}{\sqrt2} \cosh(u/2)$.
\end{proposition}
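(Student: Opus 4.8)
\noindent The plan is to treat $g(u)$, for fixed $u>0$, as an oscillatory integral in the large parameter $\sigma_0$ and to extract the contributions of its two endpoint singularities. Write
\begin{equation}
g(u) = \sinh u \cdot \Im J(u)\,, \qquad J(u) = \int_0^u A(s)\, e^{i\sigma_0 \psi(s)}\, ds\,,
\end{equation}
with $\psi(s) = \tfrac12\sinh s$ and $A(s) = \big(\sinh s\,\sqrt{\cosh u - \cosh s}\big)^{-1}$. Here $\psi$ is real, strictly increasing, and has no stationary point on $(0,u)$, while $A$ is real and odd in $s$, with an inverse-linear singularity $A(s)\sim \big(s\sqrt{\cosh u - 1}\big)^{-1}$ as $s\to 0^+$ and an inverse-square-root singularity $A(s)\sim \big((\sinh u)^{3/2}\sqrt{u-s}\big)^{-1}$ as $s\to u^-$. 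Because the phase is non-stationary, the $\sigma_0\to\infty$ expansion of $J$ localizes at the two endpoints; I would split $J = J_1 + J_2$ at the interior point $s=u/2$, noting that this point, being a regular point of the integrand, produces equal and opposite $O(1/\sigma_0)$ oscillatory terms in $J_1$ and $J_2$ that cancel in the sum, so that only the genuine endpoint singularities survive.

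\noindent For the $s=u$ endpoint, which is dominant, substitute $s=u-t$ and write $\cosh u - \cosh(u-t) = t\,q(t)$ with $q$ smooth on $[0,u/2]$ and $q(0)=\sinh u$, so that $A(u-t)= B(t)/\sqrt t$ with $B$ smooth and $B(0)=(\sinh u)^{-3/2}$. Linearising the phase, $\psi(u-t) = \psi(u) - \tfrac12\cosh u\cdot t + O(t^2)$, the leading behaviour of $J_2$ is $e^{i\sigma_0\psi(u)}B(0)\int_0^\infty t^{-1/2}e^{-i\beta t}\,dt$ with $\beta = \tfrac12\sigma_0\cosh u>0$, and the Fresnel integral $\int_0^\infty t^{-1/2}e^{-i\beta t}\,dt = \sqrt{\pi/\beta}\,e^{-i\pi/4}$ (Erd\'elyi's lemma for algebraic endpoint singularities) gives, after multiplying by $\sinh u$, taking imaginary parts, and using $\sinh u\cosh u = \tfrac12\sinh 2u$,
\begin{equation}
\sinh u\cdot \Im J_2 = \frac{2\sqrt\pi}{\sqrt{\sigma_0\sinh 2u}}\,\sin\!\Big(\tfrac12\sigma_0\sinh u - \tfrac\pi4\Big) + O(\sigma_0^{-3/2})\,.
\end{equation}
The corrections, coming from $B(t)-B(0)$ and from the quadratic part of the phase (the latter weighing in at $\sigma_0\cdot O(\beta^{-5/2})$), are each $O(\sigma_0^{-3/2})$; rewriting $\sin(x-\tfrac\pi4) = -\cos(x+\tfrac\pi4)$ reproduces the claimed term, up to the sign that makes it appear on the left-hand side in $g_\infty-g$.

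\noindent For the $s=0$ endpoint, split $A(s) = \big(s\sqrt{\cosh u - 1}\big)^{-1} + \tilde A(s)$ with $\tilde A$ smooth near $s=0$, odd, and $\tilde A(0)=0$. The singular piece, via the change of variable $v=\psi(s)$ and the Dirichlet integral $\int_0^\infty \tfrac{\sin\xi}{\xi}\,d\xi = \tfrac\pi2$, contributes $\sinh u\cdot\tfrac{\pi/2}{\sqrt{\cosh u-1}} = \tfrac{\pi}{\sqrt2}\cosh(u/2) = g_\infty(u)$, using $\tfrac{\sinh u}{\sqrt{\cosh u-1}}=\sqrt2\cosh(u/2)$; its remaining corrections are $O(\sigma_0^{-2})$ or carry the artificial $s=u/2$ phase, while $\tilde A$, being smooth, odd, and vanishing at $s=0$, contributes nothing genuine at $s=0$ (every integration-by-parts boundary term there vanishes because $\sin(\sigma_0\psi(0))=0$ and $\tilde A(0)=0$). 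Adding the two endpoint contributions, the $s=u/2$ terms cancel and one gets $g(u) = g_\infty(u) - \tfrac{2\sqrt\pi}{\sqrt{\sigma_0\sinh 2u}}\cos\!\big(\tfrac12\sigma_0\sinh u + \tfrac\pi4\big) + O(\sigma_0^{-3/2})$, which is the assertion. I expect the main obstacle to be the rigorous localization and uniform control of the remainders — in particular confirming, through the oddness of $A$ and $\psi$, that there is genuinely no $O(1/\sigma_0)$ term, so that the error is $O(\sigma_0^{-3/2})$ and not merely $o(\sigma_0^{-1/2})$; the two endpoint computations themselves are routine Watson's-lemma / Fresnel-type arguments.
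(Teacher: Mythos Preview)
Your approach is correct and is essentially the same as the paper's, though you supply far more detail: the paper's entire proof is a one-line citation of Olver's Theorem~6.1 (Chapter~4.6, Laplace's method for contour integrals), together with the remark that the leading contribution comes from the upper endpoint $s=u$. Your real-variable Erd\'elyi/Fresnel computation at $s=u$ is exactly that endpoint analysis carried out explicitly, and your $s=u$ arithmetic checks out.

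The one organizational difference worth noting is how $g_\infty(u)$ enters. The paper does \emph{not} extract $g_\infty(u)$ from the same asymptotic calculation; it obtains $g_\infty(u)=\frac{\pi}{\sqrt{2}}\cosh(u/2)$ separately, via the distributional limit $h(s)\to \frac{\pi\sigma_0}{2}\,\delta_+(s)$, and then invokes Olver only for the oscillatory correction $g_\infty-g$. You instead recover $g_\infty(u)$ as the Dirichlet-integral contribution of the artificial $1/s$ pole created by writing the sine in exponential form. Both routes are legitimate; yours has the mild advantage of being self-contained, while the paper's keeps the endpoint analysis to a single (the genuine) singular endpoint. Your cancellation argument for the interior splitting point $s=u/2$ and your observation that, after the change of variable $w=\psi(s)$, the residual amplitude at $s=0$ is even in $w$ (so no genuine $O(\sigma_0^{-1})$ term survives) are the right way to justify the $O(\sigma_0^{-3/2})$ error.
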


\begin{proof}
Use Theorem 6.1 in Chapter 4.6 \textit{Laplace's method for contour integrals} in Olver \cite{Olver}. The leading contribution to the $\sigma_0\to \infty$ asymptotics
comes from the upper boundary of the integration region in (\ref{galtdef}).
\end{proof}

Second, the asymptotics (\ref{ansatz}) is translated into an asymptotic result for 
the integral
\begin{equation}\label{DV}
\Delta V(T, \sigma_0) = \int_0^\infty e^{-\frac{u^2}{2T}} (g(u) - g_\infty(u)) \frac{du}{\sqrt{2\pi T}}
\end{equation}
which determines the price of a covered ATM call as
\begin{equation}
S_0 - C(K=S_0,T) = \frac{2\sqrt2}{\pi} S_0 \, e^{-T/8} \Delta V(T,\sigma_0)
\end{equation}
Substituting the leading approximation (\ref{ansatz}), the integral (\ref{DV}) can be written equivalently as
\begin{eqnarray}
\Delta V(T, \sigma_0) &=& \frac{1}{\sqrt{\tau}}
\int_0^\infty e^{-\frac{\sigma_0 u^2}{4\tau}} 
\cos\left(\frac12 \sigma_0 \sinh u +\frac{\pi}{4} \right)
\frac{du}{\sqrt{\sinh(2u)}} \\
&=& \frac{1}{2\sqrt{\tau}} \int_0^\infty \left( e^{-\sigma_0 \varphi_+(u)} \sqrt{i}
+ e^{-\sigma_0 \varphi_-(u)} \sqrt{-i} \right) \frac{du}{\sqrt{\sinh(2u)}}\nonumber
\end{eqnarray}
where we denoted
\begin{equation}
\varphi_\pm(u) = \frac{u^2}{4\tau} \mp \frac{i}{2} \sinh u \,.
\end{equation}

The two integrals can be evaluated using the saddle point method. 
They are similar so it is sufficient to consider the first integral
\begin{equation}
I_+(\tau) = \int_0^\infty  e^{-\sigma_0 \varphi_+(u)} 
\frac{du}{\sqrt{\sinh(2u)}}
\end{equation}
The function $\varphi_+(u)$ has saddle points given by the solutions of the equation\\
$\varphi'_+(u)= \frac12 (\frac{u}{\tau} - i \cosh u)= 0$. This equation has solutions on the imaginary axis. The closest saddle point is at $u=i\lambda$ where $\lambda<\frac{\pi}{2}$ is the positive solution of the equation
 $\frac{\lambda}{\cos\lambda} = \tau$. This establishes (\ref{2b}). At this point we have $\varphi''_+(i\lambda) = \frac12(1/\tau + \sin\lambda)>0$.

The curves of steepest descent (ascent) from the saddle point are solutions of the equation 
\begin{equation}
\Im \varphi_+(u) = \frac{xy}{\tau} - \cos y \sinh x = 0\,,\quad u = x+ i y
\end{equation}
This equation is satisfied along the imaginary axis $x=0$, and along a curve given by
\begin{equation}
y(x) = \lambda\left(\tau \frac{\sinh x}{x} \right)
\end{equation}
where $\lambda(\tau)$ is the solution of $\frac{\lambda}{\cos \lambda} = \tau$.
Since $\lim_{\tau\to \infty} \lambda(\tau) = \frac{\pi}{2}$, the curve $y(x)$ approaches $\pi/2$ as $|x| \to \infty$.
$y(x)$ intersects the imaginary axis at the saddle point $S=i\lambda$. See Fig.~\ref{Fig:paths} for an illustration for $\tau=1$.

For the application of the saddle point method we deform the contour of integration from the real positive axis such that it passes through the saddle point $S$ as shown in Fig.~\ref{Fig:paths} as the red curve. 
Along the vertical segment $u\in [0,S]$ the function $\Re\varphi_+(u)$ increases as $u\to S$ and along the curve in the first quadrant, $\Re\varphi_+(u)$ increases further as we move away from the saddle point. We call the latter curve a steepest descent path, and the former a steepest ascent path.

The integration contour can be deformed from the real axis to this path without encountering any singularities of the integrand $1/\sqrt{\sinh(2u)}$.
The closest singularities of this function to the origin are branch points at 
$\pm i\frac{\pi}{2}$, which are farther away than the saddle point.

\begin{figure}[h]
\centering
\includegraphics[width=2.0in]{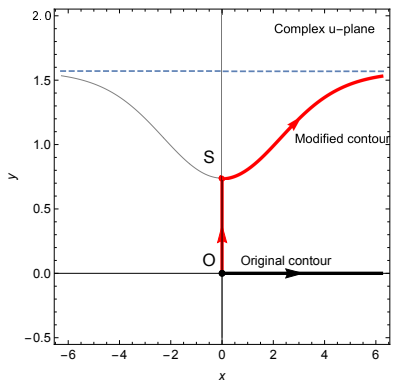}
\caption{Steepest descent paths $\Im \varphi_+(u)=0$ for $\tau=1.0$. 
The integration contour is deformed from the real axis to the path shown in red. Along the vertical piece of this path the function $\Re \varphi_+(u)$ increases as one approaches $S$, and along the curved portion it increases further as one moves away from the saddle point.
The contribution from the vertical path along the imaginary axis cancels out in the final result. The only contribution appears from the curved path.}
\label{Fig:paths}
\end{figure}

The integral is written as a sum of two contributions from the two pieces of the 
contour $I_+ = \int_0^S  + \int_S^{i\frac{\pi}{2}+\infty}$ where $S=i\lambda$ is the saddle point. The first term is imaginary, and cancels against an identical contribution from $I_-$. The second integral is dominated by the contribution of the saddle point and is expressed as a Laplace integral by introducing the new integration variable $\zeta = \varphi_+(u) - \varphi_+(S)$. Since along the contour $\Im \varphi_+(u)=0$, the variable $\zeta$ is real. 

Expanding the integrand as $\zeta \to 0$ we obtain
\begin{eqnarray}\label{int2}
&&\int_S^{i\frac{\pi}{2}+\infty} e^{-\sigma_0 \varphi_+(u)} 
\frac{du}{\sqrt{\sinh(2u)}} =
 e^{-\sigma_0 \varphi_+(i\lambda)} \int_0^\infty e^{-\sigma_0 \zeta}
\frac{d\zeta}{\sqrt{\sinh (2u)} \varphi'_+(u)} \\
&& = e^{-\sigma_0 \varphi_+(i\lambda)}\frac{1}{\sqrt{2i\varphi''_+(S)\sin(2\lambda)}}
\int_0^\infty e^{-\sigma_0 \zeta} \frac{d\zeta}{\sqrt{\zeta}} (1 + O(\sqrt{\zeta}))
\nonumber
\end{eqnarray}
The integral can be evaluated term by term by Watson's lemma. The leading contribution is
\begin{equation}
\int_0^\infty e^{-\sigma_0 \zeta}
\frac{d\zeta}{\sqrt{\sinh (2u)} \varphi'_+(u)} = \sqrt{\frac{\pi}{2i\varphi''_+(S)\sin(2\lambda)}}
\cdot \frac{1}{\sqrt{\sigma_0}} (1 + O(\sigma_0^{-1/2}))
\end{equation}

Collecting all factors gives 
\begin{equation}\label{Isol}
\Re[I_+ \sqrt{i}] = C
\frac{1}{\sqrt{\sigma_0}} e^{-\sigma_0 \varphi_+(i\lambda)} (1 + O(\sigma_0^{-1/2}))
\end{equation}
with $\varphi_+(i\lambda) = \frac14(2\sin\lambda - \lambda \cos\lambda)$ and
\begin{equation}
C= \sqrt{\frac{\pi}{2 \varphi''_+(i\lambda) \sin(2\lambda)}} =
\sqrt{\frac{\pi \lambda}{\sin\lambda \cos^2\lambda(1 + \lambda \tan\lambda)}}
\end{equation}

The same result is obtained for $I_-$. Adding their contributions reproduces the exponential factor implicitly defined by the $\log$ arg in (\ref{scalinglimit}).

\newpage

\end{document}